\journal{Journal of Differential Equations}
\definecolor{a1}{rgb}{0,0,0.8}   % Define a cor e o seu nome no sistema (R,G,B).
\newcommand{\be}{\begin{equation}}
\newcommand{\ee}{\end{equation}}
\newtheorem{thm}{Theorem}
\newtheorem{lem}[thm]{Lemma}
\begin{document}
\begin{frontmatter}
\title{Equations of Motion for Variational Electrodynamics}
\author[fsc]{Jayme De Luca}
\ead{jayme.deluca@gmail.com}
\address[fsc]{Departamento de F\'{\i}sica,
Universidade Federal de S\~{a}o Carlos,
S\~{a}o Carlos, S\~{a}o Paulo 13565-905, Brazil}

\begin{abstract}

We extend the variational problem of Wheeler-Feynman electrodynamics by generalizing the electromagnetic functional to a local space of absolutely continuous trajectories possessing a derivative (velocities) of bounded variation. We show here that the Gateaux derivative of the generalized functional defines \emph{two} partial Lagrangians for variations in our generalized local space, one for each particle.  We prove that the critical-point conditions of the generalized variational problem  are: (i) the Euler-Lagrange equations must hold Lebesgue-almost-everywhere and (ii)  the momentum of each partial Lagrangian and the Legendre transform of each partial Lagrangian must be absolutely continuous functions, generalizing the Weierstrass-Erdmann conditions.
\end{abstract}

%\pacs{05.45.-a, 02.30.Ks, 03.50.De,41.60.-m}

\begin{keyword}
 calculus of variations, absolute continuity, neutral differential-delay equations, state-dependent delay.
\end{keyword}

\end{frontmatter}

\section{Introduction}
\subsection{Significance of the variational formulation}

Electrodynamics has \emph{neutral differential-delay equations} of mixed type with implicitly defined \emph{state-dependent} delays for the motion of point charges\cite{WheeFey}, which theory is still a challenge for present day mathematics. The theory of differential-delay equations with state-dependent delay initiated in the 70's with the foundations based on infinite-dimensional dynamical systems \cite{Mallet-Paret1,Mallet-Paret2,JackHale,Hans-Otto, Mallet-Paret3, HKWW06, Nicola1, Angelov} and the numerical studies \cite{Nicola1,Bellman,Driverbook,Nicola2} (see also Ref. \cite{BellenZennaro} for an extensive list of references).
A formal variational structure for the electromagnetic equations is known since 1903\cite{WheeFey,Schwarz}, but only recently the variational structure has been embedded into a variational principle\cite{JMP2009,minimizer,cinderela}. The existence of the variational principle is important for the analytic and numerical studies because one is dealing with functional minimization  \cite{JMP2009,CAM}, which makes the electromagnetic equations special in the class of neutral differential-delay equations.
\par
\subsection{What is this paper about}
Here we invert the direction of application of the variational principle of Refs. \cite{JMP2009,minimizer,cinderela} by extending the local domain of trajectory variations to the set of absolutely continuous orbits possessing velocities of bounded variation. The generalized electromagnetic variational problem is studied for variations belonging to the local normed space $X_{BV}$ of \textit{absolutely continuous} orbits possessing a velocity of \textit{bounded variation}. 
\par
The classical problem of the calculus of variations studies a functional of the classical mechanical form, $F \equiv \int_{0}^{T} \mathcal{L} (\mathbf{x},\dot{\mathbf{x}}, t)dt $ \cite{Gelfand,Troutman}, usually minimized on a domain of continuous and piecewise $C^2$ orbits possessing velocity discontinuities on a finite grid of times  (henceforth breaking points). The critical-point conditions of the former functional are (i) at the breaking points the momentum $P \equiv \partial \mathcal{L} / \partial \dot{\mathbf x}$ and the Legendre transform $E (\mathbf{x},\dot{\mathbf x}, t) $ of the Lagrangian $ \mathcal{L} (\mathbf{x},\dot{\mathbf x}, t) $, defined as $E (\mathbf{x},\dot{\mathbf x}, t) \equiv (\dot{\mathbf{x}}\cdot \partial \mathcal{L}/\partial \dot {\mathbf{x}} )- \mathcal{L} (\mathbf{x},\dot{\mathbf x}, t)$,   must be \textit{continuous} functions (henceforth the Weierstrass-Erdmann corner conditions\cite{Gelfand, Troutman}) and (ii) the Euler-Lagrange equation should hold on all other points \cite{Gelfand, Troutman}. 
  \par
The electromagnetic functional does \textit{not} have the classical mechanical form $F \equiv \int_{0}^{T} \mathcal{L} (\mathbf{x},\dot{\mathbf{x}}, t)dt $ \cite{Gelfand}, and despite the importance to electrodynamics its critical-point conditions have not been studied in functional analytic detail.  While the generalized electromagnetic functional is {\it{not}} of the classical mechanical type, we show here that its first variation (the Gateaux derivative) decomposes into a \textit{sum} involving \textit{two} partial Lagrangians of the former type on our extended local space $X_{BV}$,  i.e., $\delta S=\delta  S_1 + \delta S_2$ with $S_i \equiv \int_{0}^{T} \mathcal{L}_i (\mathbf{x},\dot{\mathbf x}, t)dt $ \, for $i=1,2$.

\par
After generalizing the domain of the electromagnetic functional to $X_{BV}$ we prove here that the generalized critical-point conditions are (i) the momentum of each partial Lagrangian and the Legendre transform of each partial Lagrangian must be \emph{absolutely continuous} functions and (ii) the Euler-Lagrange equations must be satisfied Lebesgue-almost-everywhere, which is a well-defined request because velocities of bounded variation have a derivative Lebesgue-almost-everywhere. 
\par
References \cite{minimizer, cinderela} studied the variational two-body problem in a domain $X_{\widehat{C}^2} $ of continuous and piecewise $C^2$ orbits possessing discontinuous velocities on a finite grid of times. The critical-point conditions of Refs. \cite{minimizer,cinderela} are Euler-Lagrange equations holding piecewise \emph{and} the Weierstrass-Erdmann corner conditions\cite{Gelfand} that the momenta and the partial energies are continuous. The absolute continuity condition is not part of the results of \cite{minimizer, cinderela} and is a stronger version of Weierstrass-Erdmann conditions coming from our extension of the electromagnetic domain to $X_{BV}$.

\subsection{Existence and uniqueness results}
The neutral differential-delay equations with state-dependent delay\cite{WheeFey, JMP2009} connected to the electromagnetic variational principle are still not well understood in terms of the nature of solutions and their existence and uniqueness. The early studies found a one-parameter family of $C^\infty$ circular-orbit-solutions for the two-body problem \cite{Schoenberg, Schild}. Some $C^\infty$ circular-orbit-solutions for more than two charges are discussed in Ref. \cite{Martinez}. The earliest discussion of admissible solutions for similar equations is found in Refs. \cite{Driver_A, Driver_B}.  In Refs. \cite{Driver_C, Driver_D} an existence result for $C^{\infty}$ solutions of the equal charges problem (repulsive interaction) with initial condition restricted to a line was proved with the contraction mapping principle. The contraction mapping of Refs.  \cite{Driver_C, Driver_D} holds only for globally large charge separations and converges to a $C^\infty$ solution. For the repulsive problem orbits and delays become asymptotically unbounded, unlike the globally bounded orbits that are possible for the opposite charges problem\cite{Schild, Martinez}.  The conditions for a well-posed $C^{\infty}$ solution of the general two-body problem are discussed in Ref. \cite{Murdock}. 
\par
In the days of Wheeler and Feynman\cite{WheeFey} the modern  theory of delay equations was not out \cite{HKWW06, JackHale, Hans-Otto}, and the equations were originally studied for $C^\infty$ solutions only. Neutral differential-delay equations can propagate a velocity discontinuity\cite{BellenZennaro}, a property directly related to the existence of extrema with velocity discontinuities for the variational problem. The relation between solutions of the Wheeler-Feynman neutral differential-delay equations and the variational problem is discussed in Ref. \cite{minimizer}. After our generalization of Ref. \cite{JMP2009} to a variational boundary value problem, it became natural to search for solutions with velocity discontinuities on a set of zero measure. The present author and collaborators have studied solutions with velocity discontinuities for the two-body problem in Refs. \cite{cinderela,CAM,Daniel}. In Ref. \cite{cinderela} we gave the first existence and uniqueness result for orbits with discontinuous velocities and boundary data near those of circular-orbits of large radii. The existence and uniqueness result of Ref. \cite{cinderela} is studied in further detail for the simplest type of boundary data in Theorem 2 of Ref. \cite{Daniel}.

\subsection{How this paper is divided}
In Section \ref{functional} we explain the variational boundary-value problem for the electromagnetic two-body system. In Section \ref{sectionXC2} we review a simple derivation of the critical-point condition in $X_{{\widehat{C}}^2}$ to prepare the stage for the extension to a variational problem in $X_{BV}$. We start Section \ref{Stieltjes} proving some lemmas necessary to convert a Lebesgue integral  into a Stieltjes integral and then perform an integration by parts in the expression for the first variation  (Gateaux derivative). In this same Section \ref{Stieltjes} we derive the generalized critical-point conditions. We start from the problem of deriving the Weierstrass-Erdmann  continuity conditions of the partial momenta using a functional defined from the trajectories in $\mathbb{R}^3$ and an integration by parts.
  The Legendre-transform condition is studied in the appendix applying the same integration by parts technique of the main text and exploring a linearity property of the electromagnetic functional to generalize it to a larger domain $T_{BV} \otimes X_{BV} \equiv \mathbb{R} \times \mathbb{R}^3$ containing $X_{BV}$.  In the enlarged space $T_{BV} \otimes X_{BV} \equiv \mathbb{R} \times \mathbb{R}^3$ the time coordinate of each particle is a monotonically increasing real function of the independent variable possessing a derivative of bounded variation.  The Legendre transforms of the partial Lagrangians appear as the fourth components of the momenta of the variational problem for the generalized functional in $T_{BV} \otimes X_{BV}$. %Last, in Section (\ref{Discussions}) we put the discussions.

\section{Boundary-value problem}
\label{functional}

 We write the electromagnetic functional \cite{JMP2009} in units where the speed of light is $c \equiv 1$, the electronic charge and electronic mass are $e_{1} \equiv -1$ and $m_{1}$, respectively, and the protonic charge and protonic mass are $e_{2} \equiv 1$ and $m_2$, respectively. We henceforth use the index  $i=1$ to denote the electronic trajectory and $i=2$ to denote the protonic trajectory. Each absolutely continuous trajectory of $X_{BV}$ is a real function of time, $t \rightarrow  \mathbf{x}_i(t) \in \mathbb{R}^3 $, possessing a derivative ${\dot{\mathbf{x}}}_i(t)$ of bounded variation. Central to the construction of the electromagnetic functional are the light-cone conditions 
\begin{equation}
t_j^{\pm}= t \pm \vert {\mathbf{x}_i}(t)-{\mathbf{x}}_{j}(t_j^{\pm}) \vert \equiv t \pm r_{ij}^{\pm},
\label{light-cone}
\end{equation}
where 
\begin{eqnarray}
r_{ij}^{\pm} \equiv \vert {{\mathbf{x}_i(t) -\mathbf{x}}_{j}(t \pm r_{ij}^{\pm})} \vert , 
\label{defrij}
\end{eqnarray}
is the Euclidean norm of the spatial separation in light-cone and $j \equiv 3-i$ for $i=1,2$.
 Equation (\ref{light-cone}) is an implicit condition to be solved for $t_{j}^{\pm}(t)$ with given trajectories $\mathbf{x}_i (t) \in X_{BV}$ and $\mathbf{x}_j (t) \in X_{BV}$,  having a state-dependency on either the advanced or the retarded coordinates $\mathbf{x}_{j}(t_{j}^{\pm}(t) ) $. In Eqs. (\ref{light-cone}) and (\ref{defrij}) the plus sign defines the future light-cone condition and the minus sign defines the past light-cone condition. 
   \par
In order for (\ref{light-cone}) to have unique solutions and in order for the electromagnetic functional to be well-defined it is necessary that both trajectories have a velocity lesser than the speed of light,
\begin{equation}
 \vert \dot{\mathbf{x}}_i(t) \vert < 1,
\label{sub-cond}
\end{equation}
for $i=1,2$ wherever the derivative is defined in $X_{BV}$, henceforth sub-luminal orbits. It is shown in Proposition 1 of Ref. \cite{JMP2009} that sufficiently small neighbourhoods of sub-luminal orbits contain only sub-luminal orbits. Therefore, all orbits on the local space $X_{BV}$ of a sub-luminal orbit are sub-luminal orbits. The retarded and the advanced deviating arguments $t_j^{\pm}(t)$ appearing everywhere in the electromagnetic problem are a manifestation of the Einstein locality condition, which demands that only trajectory points satisfying the light-cone condition (\ref{light-cone}) should interact, i.e., point $(t,\mathbf{x}_i(t))$ with point $(t_j^-(t), \mathbf{x}_j(t_j^{-}(t))$ and point $(t,\mathbf{x}_i(t))$ with point $(t_j^+(t), \mathbf{x}_j(t_j^{+}(t))$.

\par
According to Lemma 1 of Ref. \cite{CAM}, for continuous sub-luminal orbits possessing a derivative satisfying (\ref{sub-cond}) Lebesgue-almost-everywhere, the past and the future light-cone conditions (\ref{light-cone}) define unique maps for the deviating arguments   
\begin{eqnarray}
t_{1}\rightarrow t_{2}^+(t_1, \mathbf{x}_1 (t_{1})), &\; & \;t_{1}\rightarrow \notag
t_{2}^{-}(t_1, \mathbf{x}_1 (t_{1})), \notag \\
t_{2}\rightarrow t_{1}^{+}(t_2, \mathbf{x}_2 (t_{2})), &\; &t_{2}\rightarrow t_{1}^{-}(t_2,\mathbf{x}_2 (t_{2})), \notag 
\end{eqnarray}
which are functions $ t_{j}^{\pm}(t_i, \mathbf{x}_i (t_{i}))$ of the independent variables $(t_i, \mathbf{x}_i(t_i))$ possessing partial derivatives Lebesgue-almost-everywhere defined by the implicit function theorem and Eq. (\ref{light-cone}) as
\begin{eqnarray}
\frac{\partial t_{j}^{\pm}}{\partial \mathbf{x}_i}&=&\frac{\pm \mathbf{n}_{ij}^{\pm}}{(1 \pm \mathbf{n}_{ij}^{\pm} \cdot {\dot{\mathbf{x}}_j^{\pm}} ) }, \label{partialtjpartialxi}\\
\frac{\partial t_{j}^{\pm}}{\partial t_i}&=&\frac{1}{(1 \pm \mathbf{n}_{ij}^{\pm} \cdot {\dot{\mathbf{x}}_j^{\pm}} ) } \label{partialtjpartialti}, 
\end{eqnarray}
where the unit vector 
\begin{equation}
\mathbf{n}_{ij}^{\pm } \equiv  (\mathbf{x}_i -\mathbf{x}_{j}^{\pm}) / \vert {{\mathbf{x}_i -\mathbf{x}}_{j}^{\pm}} \vert  , 
\label{unit-n}
\end{equation}
points from either the advanced or the retarded position $\mathbf{x}_{j}^{\pm} \equiv \mathbf{x}_{j}(t_j^{\pm}(t_i, \mathbf{x}_i(t_i)))$ to the position $\mathbf{x}_i(t)$ for each pair $(i,j)$ with $i=1,2 $ and $j \equiv 3-i$\cite{cinderela}. %These are our deviating arguments.  
\par
The variational problem is the critical-point-condition for trajectory segments $(O_{1}, L_2^-)$ (blue) and $(O_1^+, L_2)$ (green), which should satisfy the boundary conditions illustrated in FIG. \ref{Squema}, i.e., (a) have the specified initial point $O_1$ for
trajectory $1$ and have the solid boundary-segment illustrated by the red triangle on the left of FIG. \ref{Squema} to agree with the segment of trajectory $2$ inside
the light-cone of point $O_1$, and (b) have the final point $L_2$ for 
trajectory $2$ and have the solid boundary-segment illustrated by the red triangle on the right of FIG. \ref{Squema} to agree with the segment of trajectory $1$ inside the light-cone of point $L_2$\cite{JMP2009,minimizer,cinderela}.

\begin{figure}[h!]
   \centering
    \includegraphics[scale=0.4]{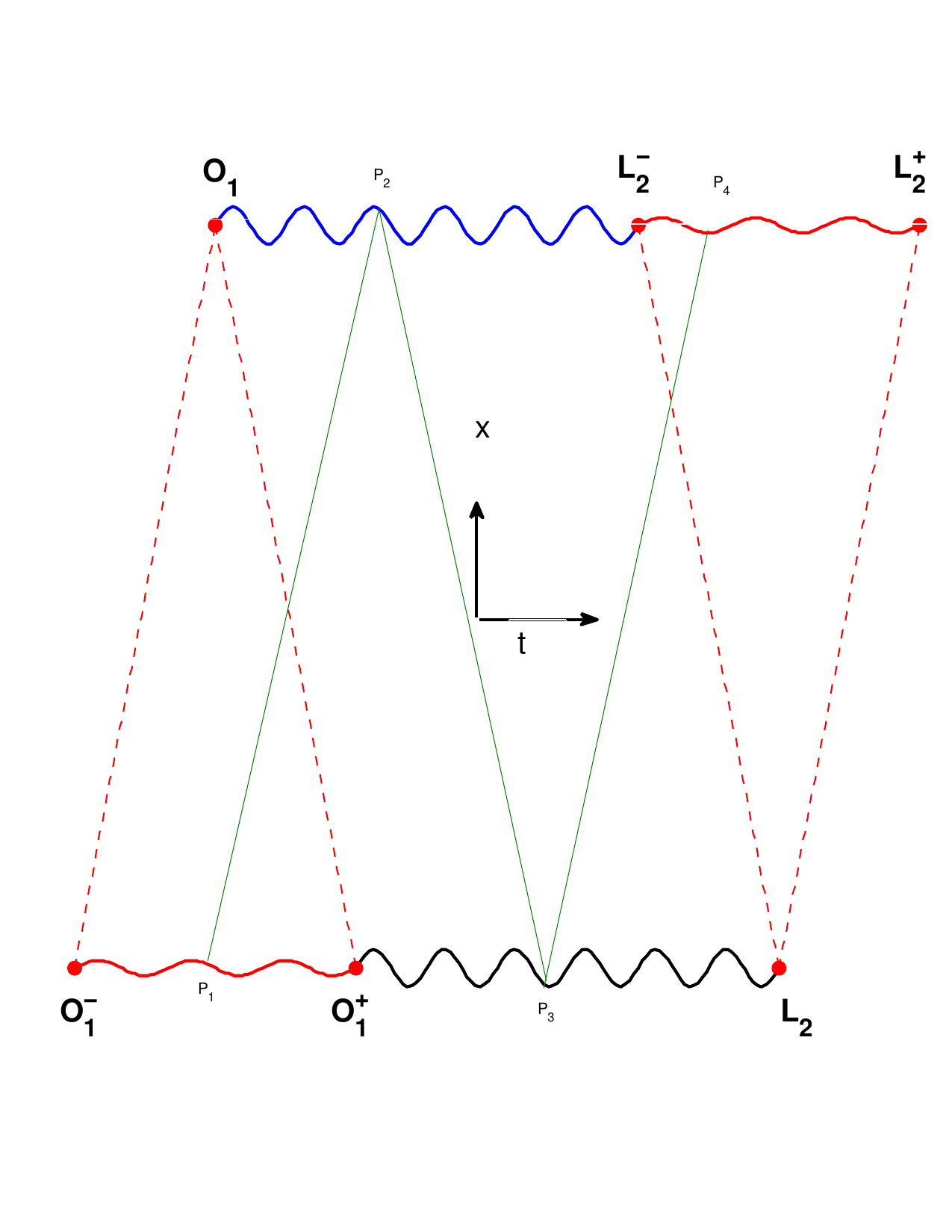} % requires the graphicx package
   \caption{The boundary conditions in $\mathbb{R}^3$ are (a) the
initial point $O_{1} \equiv \mathbf{x}_1(t_{O_1}) $ of trajectory  $1$
and the trajectory segment of $\mathbf{x}_2(t_2)$ for $ t_2\in [t_{O_1^-},t_{O_1^+}]$ (solid red line) at which endpoints the position $\mathbf{x}_2$ is in the light-cone condition with $O_1$ (indicated by broken red lines), and (b) the endpoint $L_{2} \equiv  \mathbf{x}
_2 (t_{L_2})$ of trajectory $2$ and the respective trajectory segment of $%
\mathbf{x}_1(t_1)$ for $t_1\in[t_{L_2^-},t_{L_2^+}]$ (solid red line), at which endpoints the position $\mathbf{x}_1$ is in the light-cone condition with $L_2$ (also indicated by broken red lines). Illustrated is a sewing chain $P_1,P_2,P_3,P_4$ (solid green lines).Trajectories $\mathbf{x}_1(t_1)$
 for $t_1\in[t_{O_1},t_{L_2^-}]$ (solid blue line) and $\mathbf{x}_2(t_2)$ for $t_2\in[%
t_{O_1^+},t_{L_2}]$ (solid black line) are determined by the critical-point condition.  Arbitrary units.} 
\label{Squema}
\end{figure}
\par
The electromagnetic functional is a sum of four Lebesgue integrals over the particle's times, as defined in the following. Two integrals are \textit{local} integrals involving one trajectory only, i.e., $\int \mathcal{M}_i (\mathbf{x}_i,\dot{\mathbf{x}}_i)dt_i$ for $i=1,2$. The other two Lebesgue integrals are \textit{interaction} integrals depending on both positions and velocities, where one position and velocity is evaluated at a deviating argument, $\int \mathcal{I}^\pm_{ij}({\mathbf{x}}_i, \dot{\mathbf{x}}_i,{\mathbf{x}}_{j}^{\pm},\dot{\mathbf{x}}_{j}^{\pm})dt_i$, for each $(i,j)$ pair with $i=1,2$ and $j=3-i$. By changing the integration variable of the interaction integrals, the electromagnetic functional can be expressed in two equivalent forms
\begin{align} \label{aFokker}
S[\mathbf{x}_1 , \mathbf{x}_2] \equiv &\negthickspace \int_{t_{O_1^{+}}}^{t_{L_{2}}}\negthickspace \mathcal{M}_{2}dt_{2}
+\negthickspace \int_{t_{O_{1}}}^{t_{L_2^{-}}}\negthickspace  \mathcal{M}_{1}dt_{1}+\underbrace{ \int_{t_{O_{1}}}^{t_{L_2^{+}}} \negthickspace \mathcal{I}_{12}^-dt_{1}}
+ \underbrace{\negthickspace \int_{t_{O_1}}^{t_{L_2^-}}\negthickspace  \mathcal{I}_{12}^+dt_{1}},\\
& \qquad \qquad   \qquad\qquad\qquad\qquad \quad \; \, \, \; \;  \;\Updownarrow \qquad\qquad\quad\; \; \Updownarrow \notag\\
\negthickspace \negthickspace =&\negthickspace \int_{t_{O_{1}}}^{t_{L_2^{-}}} \negthickspace \mathcal{M}_{1}dt_{1}
+\negthickspace \int_{t_{O_1^{+}}}^{t_{L_{2}}}\negthickspace \negthickspace \mathcal{M}_{2}dt_{2}
+ \overbrace{ \int_{t_{O_1^-}}^{t_{L_2}} \negthickspace \mathcal{I}_{21}^+dt_{2}}
+ \overbrace{ \int_{t_{O_1^+}}^{t_{L_{2}}} \negthickspace \mathcal{I}_{21}^-dt_{2}}.
\label{L2}
\end{align}
The vertical arrows linking an interaction integral of Eq. (\ref{aFokker}) to another of Eq. (\ref{L2}) indicate equality under a change of the integration variable using the state-dependent condition (\ref{light-cone}). 
The advanced and retarded times $t_j^{\pm}$ are absolutely continuous and monotonically increasing functions of the other particle's time $t_i$, as required for a change of integration variable (e.g. see Ref. \cite{BelaNagy}).  For orbits in $X_{BV}$ the Radon-Nikodym derivative is defined Lebesgue-almost-everywhere by  
\begin{equation}
\frac{d t_{j}^{\pm}}{d t_{i}}=\frac{\partial t_j^{\pm}}{\partial t_i}+ {\dot{\mathbf{x}}_i}\cdot \frac{\partial t_j^{\pm} }{\partial \mathbf{x}_i   }= \frac{(1 \pm \mathbf{n}_{ij}^{\pm} \cdot {\dot{\mathbf{x}}}_i)}{(1 \pm \mathbf{n}_{ij}^{\pm} \cdot {\dot{\mathbf{x}}}_{j}^{\pm})},  \label{t2difft1}
\end{equation}
where we have used Eqs. (\ref{partialtjpartialxi}) and  (\ref{partialtjpartialti}).
 In Eq. (\ref{t2difft1}),  the abbreviations $\dot{\mathbf{x}}_i $ and  $\dot{\mathbf{x}}_j^{\pm} $ denote the velocities evaluated respectively at $t_i$ and at either the advanced or the retarded deviating arguments $t_j^{\pm}(t_i, \mathbf{x}_i)$, and the unit vector $\mathbf{n}_{ij}^{\pm}$ is defined by Eq. (\ref{unit-n}). 
  \par
Here we consider the electromagnetic variational structure defined by constraints (\ref{light-cone}) and functionals (\ref{aFokker}) and (\ref{L2}) with
\begin{eqnarray}
\mathcal{M}_i\equiv m_i ( 1-\sqrt{1-\dot{\mathbf{x}}_i^2 }   \;), \label{MU}
\qquad \\
\mathcal{I}^\pm_{ij}(\mathbf{x}_i,\dot{\mathbf{x}}_i,{\mathbf{x}}_{j}^{\pm},{\dot{\mathbf{x}}}_{j}^{\pm})
\equiv \frac{(1-\dot{\mathbf{x}}_i \cdot \dot{\mathbf{x}}_{j}^{\pm}) }{2r_{ij}^{\pm}( 1 \pm \mathbf{n}_{ij}^{\pm}\cdot \dot{\mathbf{x}}_{j}^{\pm})  },  \label{EU}
\end{eqnarray}
where $r_{ij}^{\pm}$ is given by (\ref{defrij}) and again $j \equiv 3-i$ for $i=1,2$\cite{cinderela}. Notice that along a sub-luminal orbit of $X_{BV}$ Eqs. (\ref{sub-cond}), (\ref{MU}) and (\ref{EU}) yield $\mathcal{M}_i > 0$ and $\mathcal{I}^{\pm}_{ij}  > 0$ almost everywhere, thus defining a semi-bounded functional ($S > 0$) by either (\ref{aFokker}) or (\ref{L2}). 
The integrands of type  (\ref{EU}) include denominators that should be non-zero. For that we restrict here to non-collisional ($r_{ij}^{\pm} >0$) and sub-luminal orbits ($|\dot{\mathbf{x}}_j^{\pm}|<1$ ), a requirement that could be relaxed to non-zero denominators outside sets of zero measure, as discussed in Refs. \cite{Gordon1,Gordon2}. 
Finally, velocities of bounded variation form a Banach algebra\cite{Ambrosio} and therefore (\ref{MU}) and (\ref{EU}) are functions of bounded variation which are locally integrable, thus making the electromagnetic functional (\ref{aFokker}) well-defined along non-collisional sub-luminal orbits.
\par

\section{Variational problem in $X_{\widehat{C}^2} \subset X_{BV} $ }
\label{sectionXC2}

The nuts and bolts to derive the critical-point conditions is an integration by parts explained in the following.  A generic trajectory variation in either local space $X_{\widehat{C}^2} $ or $X_{BV}$ is defined by
\begin{eqnarray}\label{perturb}
\mathbf{x}_{1}^{v}(t) &= & \mathbf{x}_{1}(t)+\epsilon \mathrm{b}_{1}(t)  \qquad\mbox{and} \qquad   \dot{\mathbf{x}}_{1}^{v}(t) = \dot{\mathbf{x}}_{1}(t) +\epsilon \dot{ \mathrm{b}}_{1}(t), \\
\mathbf{x}_{2}^{v}(t) &= & \mathbf{x}_{2}(t)+\epsilon {\mathrm{b}}_{2}(t) \qquad\mbox{and}\qquad   \dot{\mathbf{x}}_{2}^{v}(t) = \dot{ \mathbf{x}}_{2}(t)+\epsilon \dot{\mathrm{b}}_{2}(t). \label{perturb2}
\end{eqnarray}
with the $\mathrm{b}_i(t)$  satisfying Dirichlet boundary conditions
\begin{eqnarray} \label{boundaries}
\mathrm{b}_{1}(t_{O_{1}})& =& 0 \qquad\mbox{and}\qquad \mathrm{b}_{1}(t_{L_{2}^{-}})=0, \\
\mathrm{b}_{2}(t_{O_{1}^{+}} ) & =& 0 \qquad\mbox{and}\qquad \mathrm{b}_{2}(t_{L_{2}})=0, \label{boundaries2}
\end{eqnarray}
where $\epsilon > 0 $ and the upper $v$ denotes a varied trajectory  in either $X_{\widehat{C}^2} $ or $X_{BV}$.

The first variation of the electromagnetic functional (\ref{aFokker}) (Gateaux derivative) is defined  by 
\begin{eqnarray}
\delta S (\mathrm{b}_1, \mathrm{b}_2) \equiv \lim_{\epsilon  \rightarrow 0} \frac{S[{\mathbf{x}_1}+\epsilon \mathrm{b}_1, {\mathbf{x}_2}+\epsilon \mathrm{b}_2 ] -S[\mathbf{x}_1, \mathbf{x}_2] }{\epsilon} . \label{Gateaux}
\end{eqnarray}
The Gateaux derivative (\ref{Gateaux}) naturally splits in a sum of two terms, $\delta S = \delta S_1 + \delta S_2$, as follows: Variation $\delta S_1$ is evaluated by holding trajectory $2$ constant while the absolutely continuous trajectory $1$ is varied.  Variation $\delta S_2$ is evaluated by holding trajectory $1$ constant while the absolutely continuous trajectory $2$ is varied.
 The linear variation $\delta S_1$ is calculated using Eq. (\ref{aFokker}) with its first term kept constant, while the linear variation $\delta S_2$ is obtained in the same manner by varying trajectory $2$ while trajectory $1$ is kept constant and using the equivalent expression (\ref{L2}) for the electromagnetic functional (with its second term kept constant).
The non-constant part of either integrand  (\ref{aFokker}) or  (\ref{L2}) is henceforth called the partial Lagrangian $i$,
\begin{eqnarray}
\mathcal{L}_i(\mathbf{x}_i, \dot{\mathbf{x}}_i,t) \equiv \mathcal{M}_i + \mathcal{I}_{ij}^{-} +\mathcal{I}_{ij}^{+}  \label{defpartial}, 
\end{eqnarray}
for $i=1,2$ and $j \equiv 3-i$. The last $t$ argument on the left-hand-side of Eq. (\ref{defpartial}) is a time dependence brought in from the dependence on trajectory $\mathbf{x}_j$ because
 the light-cone conditions (\ref{light-cone}) define deviating arguments $t_{j\pm}(t, \mathbf{x}_i)$ which depend on $t$ and $\mathbf{x}_i$ for a fixed trajectory $\mathbf{x}_j(t)$. The extra dependence on $\mathbf{x}_i$ brought in by the dependence of $t_{j\pm}$ on $\mathbf{x}_i$ is assumed and abbreviated as explained below Eq. (\ref{t2difft1}). The left-hand-side of Eq. (\ref{defpartial}) is an abbreviation for 
$ \mathcal{L}_i(\mathbf{x}_i, \dot{\mathbf{x}}_i, \mathbf{x}_j ^{+} (t,\mathbf{x}_i) ,  \mathbf{x}_j^{-} (t,\mathbf{x}_i), \dot{\mathbf{x}}_j^{+}(t,\mathbf{x}_i) ,  \dot{\mathbf{x}}_j^{-}(t,\mathbf{x}_i) ) $.

In order to use Lebesgue dominated convergence to exchange the order of the $\epsilon$-limit and the integral in Eq. (\ref{Gateaux}) we need the partial derivatives of $ \mathcal{L}_i(\mathbf{x}_i, \dot{\mathbf{x}}_i,t)$ to exist and be bounded along the sub-luminal orbits of either $X_{{\widehat{C}}^2}$ or $X_{BV}$, which is seen to be the case as follows: \par
(a) The partial derivative respect to $\dot {\mathbf{x}}_i$ is henceforth called the momentum function,
\begin{eqnarray}
P_i(t)  \equiv \frac{\partial \mathcal{L}_i}{\partial \dot{\mathbf{x}}_i}(\mathbf{x}_i(t), \dot{\mathbf{x}}_i(t),t),
\label{P1}
\end{eqnarray} 
which evaluated using Eqs. (\ref{MU}), (\ref{EU}), (\ref{DEFA}) and (\ref{DEFU}) yields
\begin{eqnarray}
P_i (t) = \frac{m_{i} \dot{\mathbf{x}}_{i}  }{ \sqrt{1 - \dot{\mathbf{x}}_i^2}  }   
-\frac{\dot{{\mathbf{x}}}_{j}^{-} }{2r_{ij}^{-}(1-\mathbf{n}_{ij}^{-}\cdot \dot{\mathbf{x}}_{j}^{-})} -
\frac{\dot{{\mathbf{x}}}_{j}^{+}}{2r_{ij}^{+}(1+\mathbf{n}_{ij}^{+}\cdot \dot{\mathbf{x}}_{j}^{+})}.
\label{momentum1} 
\end{eqnarray} 
It can be seen by inspection that the denominators on the right-hand-side of Eq. (\ref{momentum1}) are bounded away from zero in either $X_{\widehat{C}^2}$ or $X_{BV}$ because sub-luminal orbits satisfy condition (\ref{sub-cond}). Moreover, the momentum $P_1(t)$ defined by Eq. (\ref{momentum1}) in $X_{BV}$ is a function of bounded variation along a sub-luminal non-collisional orbit of $X_{BV}$ by the Banach-algebra property of $X_{BV}$. \par
(b) The partial derivative respect to $\mathbf{x}_i$ can be evaluated using Eqs. (\ref{partialtjpartialxi}), (\ref{partialtjpartialti}) and (\ref{aFokker}). To perform the calculation it is convenient to express the denominator of $\mathcal{I}_{ij}^{\pm} $ as
\begin{eqnarray}
D_{j}^{\pm} \equiv r_{ij}^{\pm} (1 \pm \mathbf{n}_{ij}^{\pm} \cdot \dot{\mathbf{x}}_j)=\pm ( t_{j}^{\pm} -t +r_{ij}^{\pm}  \mathbf{n}_{ij}^{\pm} \cdot \dot{\mathbf{x}}_j^{\pm} ), \label{defD}
\end{eqnarray}
by use of Eq. (\ref{light-cone}), where center dot denotes the scalar product of $\mathbb{R}^3$. The partial derivative of (\ref{defD}) with respect to $\mathbf{x}_i$ at a fixed $t_i=t$ is
\begin{eqnarray}
\frac{\partial D_j^{\pm}} {\partial \mathbf{x}_i}=\pm (1 -  \mathbf{v}_{j\pm}^2 + r_{ij}^{\pm} \mathbf{n}_{ij}^{\pm} \cdot \mathbf{a}_{j\pm})\frac{\partial t_j^{\pm} }{\partial \mathbf{x}_j}, \label{deriD}
\end{eqnarray}
where $\mathbf{v}_{j\pm} \equiv \dot{\mathbf{x}}_j (t_{j}^{\pm})$ and  $\mathbf{a}_{j\pm} \equiv \ddot{\mathbf{x}}_j (t_{j}^{\pm})$ and again center dot denotes the scalar product of $\mathbb{R}^3$. Equation (\ref{deriD}) involves the acceleration of particle $j$ evaluated at the time $t_j^{\pm}$, which exists almost everywhere and is Lebesgue measurable by a property of velocities of bounded variation \cite{Leoni}. Using the above we calculate 
\begin{eqnarray}
\frac{\partial \mathcal{I}_{ij}^{\pm}}{\partial \mathbf{x}_i}=-\frac{\mathcal{I}_{ij}^{\pm}}{D_{j}^{\pm}}\frac{\partial D_j^{\pm}} {\partial \mathbf{x}_i} -\frac{\dot{\mathbf{x}}_i \cdot \mathbf{a}_{j\pm}}{2D_{j}^{\pm}}\frac{\partial {t_{j} ^{\pm}}}{\partial \mathbf{x}_i},
\end{eqnarray}
and at last, from Eq. (\ref{defpartial}) it follows that
\begin{eqnarray}
\frac{\partial \mathcal{L}_i}{\partial \mathbf{x}_i}=\frac{\partial \mathcal{I}_{ij}^{+}}{\partial \mathbf{x}_i} + \frac{\partial \mathcal{I}_{ij}^{-}}{\partial \mathbf{x}_i},
\end{eqnarray}
which is a linear function of the Lebesgue integrable accelerations derived from the velocities of bounded variation.
\par
Using results (a) and (b) of above allows the use of Lebesgue dominated convergence to place the limit inside the integral and the Gateaux derivative  $\delta S_1$ can be expressed as
\begin{eqnarray}
\delta S_1 = \int_{t_{O_{1}}}^{t_{L_2^{-}}} \left[   \frac{\partial \mathcal{L}_1 }{\partial \mathbf{x}_1}\cdot \mathrm{b}_{1}   +   \frac{\partial \mathcal{L}_1}{\partial \dot{\mathbf{x}}_1} \cdot {\dot{\mathrm{b}}}_{1}   \right]dt ,
%+ (\frac{\partial \mathcal{L}_{1} }{\partial \dot{P}_{1}} } \cdot \vec{\dot{b}}_{1}) } \right] ds_{1} ,
\label{Frechet}
\end{eqnarray}
where again partial derivatives are evaluated with the Euclidean norm of $ \mathbb{R}^3$ for both $\mathbf{x}_1$ and $\dot{\mathbf{x}}_1$ and center dot denotes the scalar product of $\mathbb{R}^3$. Notice that both integrals on the right-hand-side of Eq. (\ref{Frechet}) exist and are bounded in $X_{BV}$. Notice that, unlike the integrals in (\ref{aFokker}), all integrals in (\ref{Frechet}) extend over the same range $[t_{O_1},t_{L_2^-}]$ because the boundary segments of the trajectories are kept fixed for the trajectory variations of $\delta S_1$. Likewise, the $\delta S_2$ variation has a form analogous to (\ref{Frechet}) involving the partial derivatives of the partial Lagrangian $\mathcal{L}_2(\mathbf{x}_2, \dot{\mathbf{x}}_2,t)$ integrated over the range $[t_{O_1^+},t_{L_2}]$, as given by Eqs. (\ref{defpartial}) and (\ref{partial}) with $i=2$.
\par
 A velocity discontinuity at a point of trajectory $i$ naturally creates a velocity discontinuity at either the points on the future or the past light-cones (\ref{light-cone}) and along trajectory $j$, which motivates the use of the following \textit{sewing chain} of grid times. In $X_{\widehat{C}^2}$ we define a finite number of ``grid times" along each orbit by the following forward sewing chain procedure. As illustrated in Fig. 1, we start from a point on the past history of particle $2$ and move to the corresponding point on its forward light-cone and along trajectory $1$, then to the respective point on the forward light-cone of the later point and along trajectory $2$ and so on until a last point on the future history of particle $1$. This defines times $t_{O_1} \equiv \tau_{0}^{(1)}<\tau_1^{(1)}<...<\tau_{N_1+1}^{(1)} \equiv t_{L_2^-}$ along  trajectory $1$ and times $t_{O_1^{+}} \equiv \tau_{0}^{(2)}<\tau_1^{(2)}<...<\tau_{N_1+1}^{(2)} \equiv t_{L_{2}}$ along trajectory $2$. We further use perturbations $(\mathrm{b}_1(t), \mathrm{b}_2(t)) \in X_{BV}$ possessing derivative discontinuities only along the former grid times, whose precise locations actually depend on the sizes of $(\mathrm{b}_1(t), \mathrm{b}_2(t)) $.  In such a setup  we can perform piecewise integrations by parts in the expressions of the $\delta S_i$ because both $\frac{d}{dt}(\frac{\partial \mathcal{L}_i}{\partial \dot{\mathbf{x}_i}}) $ are piecewise continuous between grid times for $i=1,2$. Notice that due to state dependency, the times $\tau_{\alpha}^{(i)}$ along each finite grid depend on both perturbations $\mathrm{b}_1(t)$ and $ \mathrm{b}_2(t)$. %The set of perturbations naturally divides in two classes: (a) the perturbations that satisfy $\mathbf{n}_{ij}^{+} \cdot (\mathrm{b}_i(\tau_{\alpha}^{(i)}) - \mathrm{b}_j(\tau_{\alpha}^{(j)}))=0, \, \forall \; (\alpha,i,j)$ are called fixed-grid perturbations, otherwise (b) the perturbation is called transversal.

Using the above enlarged grid, the second term of Eq. (\ref{Frechet}) can be integrated by parts piecewise, yielding
\begin{eqnarray}
\delta S_1 &=&\int_{t_{O_1}}^{t_{L_2 ^{-}}}{{\mathrm{b}}}_{1}\cdot \lbrack 
\frac{\partial \mathcal{L}_1 }{\partial \mathbf{x}_{1}}-\frac{d}{dt}(\frac{%
\partial \mathcal{L}_1 }{\partial \dot{\mathrm{x}}_{1}})]dt  \notag \\
&&-\sum\limits_{\sigma =1}^{\sigma =N_1}{\mathrm{b}}_{1}(\tau_{\sigma
})\cdot \Delta (\frac{\partial \mathcal{L}_1 }{\partial \dot{\mathbf{x}}_{1}})|_{\tau_{\sigma}},%
\label{EuLa}
\end{eqnarray}%
where 
\begin{equation}
\Delta (\frac{\partial \mathcal{L}_1 }{\partial \dot{\mathbf{x}}_{1}}%
)|_{\tau_{\sigma}} \equiv \frac{\partial \mathcal{L}_1 }{\partial 
\dot{\mathbf{x}}_{1}}(\tau_{\sigma }^{+})-\frac{\partial \mathcal{L}_1 }{\partial 
\dot{\mathbf{x}}_{1}}(\tau_{\sigma }^{-}),  \label{disc}
\end{equation}%
is the discontinuity of the derivative at point $t=\tau_{\sigma}$ for $\sigma=1,2,...,N_1$. 
\par
As discussed in Ref. \cite{cinderela}, the critical-point-conditions in $X_{\widehat{C}^2}$ are (i) the vanishing of the integrand on the right-hand-side of (\ref{EuLa}) for arbitrary $\mathrm{b}_1(t)$,
\begin{eqnarray}
\frac{\partial \mathcal{L}_1 }{\partial \mathbf{x}_{1}}-\frac{d}{dt}(\frac{%
\partial \mathcal{L}_1 }{\partial \dot{\mathbf{x}}_{1}})=0, \label{Euler-Lagrange}
\end{eqnarray}
henceforth the Euler-Lagrange equation on the ${{C}}^2$ segments, and (ii) the Weierstrass-Erdmann corner condition that $P_1(t)$ must be \emph{continuous} at the $N_1$ isolated velocity-discontinuity points $\tau_\sigma$ of an orbit of $X_{\widehat{C}^2}$ \cite{cinderela,Gelfand}, in order to vanish the discrete sum on the right-hand-side of (\ref{EuLa}) for arbitrary $\mathrm{b}_1(\tau_\sigma)$ and $\sigma=1,...,N$. 
\par
To conform with Ref. \cite{cinderela} and general physics literature we define
\begin{eqnarray}
  \negthickspace \negthickspace  \mathbf{A}_j (t,\mathbf{x}_i)& \equiv &  \frac{\dot{{\mathbf{x}}}_{j}^{-} }{2r_{ij}^{-}(1-\mathbf{n}_{ij}^{-}\cdot \dot{\mathbf{x}}_{j}^{-})} +
\frac{\dot{{\mathbf{x}}}_{j}^{+}}{2r_{ij}^{+}(1+\mathbf{n}_{ij}^{+}\cdot \dot{\mathbf{x}}_{j}^{+})}, \label{DEFA} \\
\negthickspace  \negthickspace \negthickspace \negthickspace {U}_j (t,\mathbf{x}_i)& \equiv & \frac{1 }{2r_{ij}^{-}(1-\mathbf{n}_{ij}^{-}\cdot \dot{\mathbf{x}}_{j}^{-})} +
\frac{1}{2r_{ij}^{+}(1+\mathbf{n}_{ij}^{+}\cdot \dot{\mathbf{x}}_{j}^{+})}, \label{DEFU}
\end{eqnarray}
which is first used to express (\ref{defpartial}) as 
\begin{equation}
\mathcal{L}_i (\mathbf{x}_i, \dot{\mathbf{x}}_i,t)= \mathcal{M}_i(\dot{\mathbf{x}}_i)-\dot{\mathbf{x}}_i \cdot \mathbf{A}_j (t,\mathbf{x}_i)+ U_j(t,\mathbf{x}_i). \label{partial}
\end{equation} 

\section{Stieltjes integration by parts in $X_{BV}$}
\label{Stieltjes}

For the variational problem in $X_{BV}$ one must still see if $P_i(t)$ should be continuous, as obtained in Section \ref{sectionXC2} for $X_{{\widehat{C}^2}  } $. The difficulty is that discontinuity points in $X_{BV}$ are only countable and so could accumulate, invalidating the piecewise integration by parts and Riemann integration used to obtain (\ref{EuLa}) here and in Ref. \cite{cinderela}. The space $X_{BV}$ is henceforth equipped with the norm  
\begin{eqnarray}
\Vert \mathrm{b}_1 \Vert_{BV} \equiv \vert \dot{\mathrm{b}}_1 (t_{O_1})\vert + \mathbb{TV} [ t_{O_1}, t_{L_2^-}, \dot{\mathrm{b}}_1(t) ] , \label{norma}
\end{eqnarray}
where the \emph{total variation} ($\mathbb{TV}$) of $ \dot{\mathrm{b}}_1 (t)$ on  $[t_{O_1},t_{L_2^-}]$ is defined by
\begin{eqnarray}
\mathbb{TV} [ t_{O_1}, t_{L^{-}_{2}}, \dot{\mathrm{b}}_1(t)] \equiv \sup_{P \in \wp} \sum_{i=0}^{N} \vert \dot{\mathrm{b}}
_{1}(\tau_{i+1})-\dot{\mathrm{b}}_{1}(\tau_i) \vert, \label{total_variation}
\end{eqnarray}
with $\wp$ the set of all partitions of $[t_{O_1},t_{L_2^-}]$ in disjoint intervals, as discussed for example in \cite{Ambrosio}.

\begin{lem} \label{LBV}  Along the trajectories of $X_{BV}$,  the momentum function defined by Eq. (\ref{P1}), $P_1(t) \equiv{\partial \mathcal{L}_1 }/{ \partial \dot{{\mathbf{x}}}_{1}}$, is a function of bounded variation.
\label{lema1}
\end{lem}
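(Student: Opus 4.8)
The plan is to prove Lemma \ref{lema1} by direct inspection of the explicit formula \eq{momentum1} for $P_1(t)$, treating its three summands separately and repeatedly invoking the following closure properties of the space $BV[t_{O_1},t_{L_2^-}]$ of $\mathbb{R}$- or $\mathbb{R}^3$-valued functions of bounded variation on the compact interval $[t_{O_1},t_{L_2^-}]$: (i) $BV$ is a Banach algebra, so finite sums and products of bounded-variation functions are of bounded variation \cite{Ambrosio}; (ii) if $f\in BV$ is bounded away from zero then $1/f\in BV$; (iii) if $f\in BV$ takes values in a compact set $K$ and $h$ is Lipschitz on $K$, then $h\circ f\in BV$ (with $\mathbb{TV}(h\circ f)\le L\,\mathbb{TV}(f)$); and (iv) if $g\in BV$ and $\phi$ is monotone, then $g\circ\phi\in BV$, since a monotone map carries partitions to partitions. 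Throughout I use that along a sub-luminal non-collisional orbit of $X_{BV}$ there are constants $0<c<1$ and $\delta>0$ with $|\dot{\mathbf{x}}_i|\le c$ and $r_{12}^\pm\ge\delta$, the first from condition \eq{sub-cond} together with proposition 1 of Ref. \cite{JMP2009}, the second being the non-collisional restriction imposed after \eq{EU}.

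First I would treat the kinetic term $m_1\dot{\mathbf{x}}_1/\sqrt{1-\dot{\mathbf{x}}_1^2}$. By definition of $X_{BV}$ the velocity $\dot{\mathbf{x}}_1\in BV$; by (i) the scalar $\dot{\mathbf{x}}_1^2\in BV$, and since $0\le\dot{\mathbf{x}}_1^2\le c^2<1$ the function $1-\dot{\mathbf{x}}_1^2$ lies in $[1-c^2,1]$, so by (iii) applied to $h(s)=\sqrt{s}$ (Lipschitz on $[1-c^2,1]$) we get $\sqrt{1-\dot{\mathbf{x}}_1^2}\in BV$, bounded below by $\sqrt{1-c^2}>0$; by (ii) its reciprocal is in $BV$, and by (i) the product with $\dot{\mathbf{x}}_1$ and the constant $m_1$ is in $BV$.

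Next, for each of the two interaction terms $\dot{\mathbf{x}}_2^\pm\big/\bigl(2r_{12}^\pm(1\pm\mathbf{n}_{12}^\pm\cdot\dot{\mathbf{x}}_2^\pm)\bigr)$ I would show that every building block is of bounded variation. The deviating argument $t\mapsto t_2^\pm(t)$ is absolutely continuous and monotonically increasing, as recorded around \eq{t2difft1}, so by (iv) the composition $\dot{\mathbf{x}}_2^\pm=\dot{\mathbf{x}}_2(t_2^\pm(t))\in BV$; since $\mathbf{x}_1,\mathbf{x}_2$ are absolutely continuous they are continuous and of bounded variation on the compact interval, and precomposing $\mathbf{x}_2$ with the monotone map $t_2^\pm$ keeps it in $BV$, so $\mathbf{x}_1(t)-\mathbf{x}_2^\pm\in BV$; applying (iii) with $h=|\cdot|$ and using $r_{12}^\pm\ge\delta$ shows $r_{12}^\pm\in BV$ and is bounded away from zero, whence by (ii) $1/r_{12}^\pm\in BV$ and by (i) $\mathbf{n}_{12}^\pm=(\mathbf{x}_1-\mathbf{x}_2^\pm)/r_{12}^\pm\in BV$ (cf. \eq{unit-n}). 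Then $\mathbf{n}_{12}^\pm\cdot\dot{\mathbf{x}}_2^\pm\in BV$ by (i), and because $|\mathbf{n}_{12}^\pm\cdot\dot{\mathbf{x}}_2^\pm|\le|\dot{\mathbf{x}}_2^\pm|\le c<1$ the factor $1\pm\mathbf{n}_{12}^\pm\cdot\dot{\mathbf{x}}_2^\pm$ lies in $[1-c,1+c]$; hence $r_{12}^\pm(1\pm\mathbf{n}_{12}^\pm\cdot\dot{\mathbf{x}}_2^\pm)\in BV$ is bounded below by $\delta(1-c)>0$, its reciprocal is in $BV$ by (ii), and multiplying by $\dot{\mathbf{x}}_2^\pm$ gives a bounded-variation vector function by (i). Adding the kinetic term and the two interaction terms yields $P_1(t)\in BV$.

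The step I expect to be the genuine obstacle — the only one that is not pure algebra — is the claim that $\dot{\mathbf{x}}_2^\pm=\dot{\mathbf{x}}_2(t_2^\pm(t))$ is of bounded variation: one must use that the state-dependent deviating argument defined implicitly by the light-cone condition \eq{light-cone} is genuinely a \emph{monotone} function of $t$ (indeed absolutely continuous, by the Radon--Nikodym formula \eq{t2difft1}), so that precomposition does not increase total variation, and that this composition is legitimate even though $\dot{\mathbf{x}}_2$ is only defined Lebesgue-almost-everywhere. The remaining care is in tracking the uniform bounds $c<1$ and $\delta>0$ so that all reciprocals and the square root act on compact sets bounded away from their singularities; once these are secured, properties (i)--(iv) of $BV$ close the argument.
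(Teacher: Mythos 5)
Your proposal is correct and follows essentially the same route as the paper's own proof, which likewise argues by inspection of Eq. (\ref{momentum1}), notes that sub-luminality keeps the denominators bounded away from zero, and invokes the Banach-algebra property of $BV$ to close the argument. Your version simply makes explicit the steps the paper leaves implicit (the Lipschitz composition for the square root and the norm, the reciprocal of a $BV$ function bounded away from zero, and the precomposition of $\dot{\mathbf{x}}_2$ with the monotone deviating argument $t_2^{\pm}$), all of which are sound.
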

\begin{proof}
Because the orbits of $X_{BV}$ are sub-luminal, the three denominators of Eq. (\ref{momentum1}) are bounded away from zero.
Formula (\ref{momentum1}) for the momentum involves well-defined algebraic operations between functions that are either absolutely continuous or at the most of bounded variation, yielding a function of bounded variation because of the Banach-algebra property of the BV space\cite{Ambrosio}. \end{proof}

\begin{lem}  Under the assumptions of Lemma \ref{LBV}, the second term of the right-hand-side of (\ref{Frechet}) becomes a Stieltjes integral, $\int_{t_{O_{1}}}^{t_{L_2^{-}}} P_1(t) \cdot \dot{\mathrm{b}}_1 (t) dt=-\int \mathrm{b}_1 \cdot dP_1$.
 \label{la2}
\end{lem}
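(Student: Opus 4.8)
The plan is to identify the left-hand side, which \emph{a priori} is an ordinary Lebesgue integral, with a Stieltjes integral against the signed measure generated by the absolutely continuous function $\mathrm{b}_1$, and then to apply the integration-by-parts formula for Stieltjes integrals, the boundary contribution being annihilated by the Dirichlet conditions (\ref{boundaries}).

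First I would record the regularity at hand. By Lemma \ref{LBV} each Cartesian component of $P_1(t)$ is of bounded variation on $[t_{O_{1}},t_{L_2^{-}}]$, hence bounded and Borel measurable there; on the other hand $\mathrm{b}_1$ is absolutely continuous with $\dot{\mathrm{b}}_1\in L^1[t_{O_{1}},t_{L_2^{-}}]$, so the product $P_1\cdot\dot{\mathrm{b}}_1$ lies in $L^1$ and the left-hand side of the claim is a well-defined Lebesgue integral. Since $\mathrm{b}_1$ is absolutely continuous, the vector measure $d\mathrm{b}_1$ is absolutely continuous with respect to Lebesgue measure with density $\dot{\mathrm{b}}_1$, whence, componentwise, $\int_{t_{O_{1}}}^{t_{L_2^{-}}}P_1\cdot\dot{\mathrm{b}}_1\,dt=\int_{t_{O_{1}}}^{t_{L_2^{-}}}P_1\cdot d\mathrm{b}_1$; this exhibits the left-hand side as a Stieltjes integral.

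Next I would invoke the integration-by-parts identity: for $\mathrm{b}_1$ continuous on $[t_{O_{1}},t_{L_2^{-}}]$ and $P_1$ of bounded variation there, both $\int_{t_{O_{1}}}^{t_{L_2^{-}}}\mathrm{b}_1\cdot dP_1$ and $\int_{t_{O_{1}}}^{t_{L_2^{-}}}P_1\cdot d\mathrm{b}_1$ exist as Riemann--Stieltjes integrals and satisfy $\int_{t_{O_{1}}}^{t_{L_2^{-}}}\mathrm{b}_1\cdot dP_1+\int_{t_{O_{1}}}^{t_{L_2^{-}}}P_1\cdot d\mathrm{b}_1=\mathrm{b}_1(t_{L_2^{-}})\cdot P_1(t_{L_2^{-}})-\mathrm{b}_1(t_{O_{1}})\cdot P_1(t_{O_{1}})$. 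The boundary conditions (\ref{boundaries}) give $\mathrm{b}_1(t_{O_{1}})=\mathrm{b}_1(t_{L_2^{-}})=0$, so the right-hand side vanishes; combining with the preceding paragraph yields $\int_{t_{O_{1}}}^{t_{L_2^{-}}}P_1\cdot\dot{\mathrm{b}}_1\,dt=\int_{t_{O_{1}}}^{t_{L_2^{-}}}P_1\cdot d\mathrm{b}_1=-\int_{t_{O_{1}}}^{t_{L_2^{-}}}\mathrm{b}_1\cdot dP_1$, which is the assertion.

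The delicate step is the symmetric form of the integration-by-parts identity: $P_1$, being merely of bounded variation, may have a countable and possibly dense set of jump points in $X_{BV}$ --- exactly the pathology that obstructs the piecewise Riemann argument of Section \ref{sectionXC2}. This is where the continuity of $\mathrm{b}_1$ (it is in fact absolutely continuous) is decisive: $P_1$ and $\mathrm{b}_1$ then have no common discontinuity point, and for such a pair the classical symmetric Stieltjes formula holds with no jump correction. Equivalently, one may split $dP_1$ into its absolutely continuous, pure-jump, and singular-continuous parts: tested against the continuous integrand $\mathrm{b}_1$, the pure-jump part contributes $\sum_{k}\mathrm{b}_1(t_k)\cdot\Delta P_1(t_k)$ to each side, the would-be mismatch terms $[\mathrm{b}_1(t_k^{+})-\mathrm{b}_1(t_k^{-})]\cdot\Delta P_1(t_k)$ vanishing by continuity, while the remaining parts present no difficulty. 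Since $\mathrm{b}_1$ vanishes at both endpoints, the conventions for possible atoms of $dP_1$ at $t_{O_{1}}$ and $t_{L_2^{-}}$ are immaterial. I would support the bounded-variation-versus-continuous integration-by-parts step with a standard reference, in the style of \cite{Leoni,Ambrosio}.
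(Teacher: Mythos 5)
Your proposal is correct and follows essentially the same route as the paper's proof: convert the Lebesgue integral to a Stieltjes integral against $d\mathrm{b}_1$, apply Stieltjes integration by parts for an absolutely continuous function against one of bounded variation, and annihilate the boundary term with the Dirichlet conditions (\ref{boundaries}). Your additional justification of the symmetric integration-by-parts identity (no common discontinuity points because $\mathrm{b}_1$ is continuous, despite the possibly dense jump set of $P_1$) spells out a step the paper delegates to the references \cite{BelaNagy,Folland} without further comment.
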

\begin{proof}
The first part of the proof rests on the fact that $\mathrm{b}_1(t)$ is absolutely continuous and on Lemma \ref{lema1} assuring that $P_1(t)$ is of bounded variation, such that 
\begin{eqnarray}
\int_{t_{O_{1}}}^{t_{L_2^{-}}}   P_1(t) \cdot \dot{\mathrm{b}}_{1}(t)   dt =\int P_1 \cdot d\mathrm{b}_1. \label{eq1lema2}
\end{eqnarray}
Second, since $P_1(t)$  is a function of bounded variation under the assumptions of Lemma \ref{lema1}, and $\mathrm{b}_1(t)$ is absolutely continuous in $X_{BV}$, we can perform Stieltjes integration by parts \cite{BelaNagy, Folland} on the right-hand-side of Eq. (\ref{eq1lema2}), yielding
\begin{eqnarray}
\int P_1 \cdot d\mathrm{b}_1=[ P_1(t_{L_2^{-}}) \cdot \mathrm{b}_1(t_{L_2^{-}})-P_1(t_{O_1}) \cdot \mathrm{b}_1(t_{O_1})] -\int \mathrm{b}_1 \cdot dP_1. 
\label{eq2lema2}
\end{eqnarray}
Finally, the boundary term on the right-hand-side of Eq. (\ref{eq2lema2}) vanishes due to the boundary conditions (\ref{boundaries}), finishing the proof.
\label{lema2}
\end{proof}
\par
To study the critical-point-condition in $X_{BV}$ we start by expressing the second term on the right-hand-side of (\ref{Frechet}) using the integral identity of Lemma \ref{la2}, 
\begin{eqnarray}
\delta S_1 = \int_{t_{O_{1}}}^{t_{L_2^{-}}}  (\mathrm{b}_{1} \cdot \frac{\partial \mathcal{L}_1}{\partial \mathbf{x}_1})dt  \; -\int \mathrm{b}_1 \cdot dP_1.
\label{Frechet-Stieltjes}
\end{eqnarray}
In Eq. (\ref{Frechet-Stieltjes}), $P_1(t)$ is a function of bounded variation by Lemma \ref{LBV} and therefore it has a unique decomposition as a sum of three terms,  $P_1(t)=P_1^{ac}(t)+P_1^{sc}(t)+P_1 ^{J}(t)$, where (i) $P_1^{ac}(t)$ is the absolutely continuous part of $P_1 (t)$, (ii) $P_1^{sc}(t)$ is the singular-continuous part  of $P_1 (t)$, and (iii) $P_1 ^{J}(t)$ is a  jump function containing the denumerable set of jump discontinuities of $P_1(t)$(see for example Ref. \cite{Leoni}). Such a decomposition induces a decomposition of the Borel-Stieltjes measure on the second term of the right-hand-side of (\ref{Frechet-Stieltjes}) into three corresponding measures. The part with the absolutely continuous measure can be integrated back by parts and the jump part yields a sum of jump discontinuities, resulting in a generalized form of (\ref{EuLa}), 
\begin{eqnarray}
\delta S_1 &=&\int_{t_{O_1}}^{t_{L_2 ^{-}}}{{\mathrm{b}}}_{1}\cdot \lbrack 
\frac{\partial \mathcal{L}_1 }{\partial \mathbf{x}_{1}}-\frac{dP_1^{ac}}{dt}]dt  -\int \mathrm{b}_1 \cdot dP_1^{sc}\notag \\
&&-\sum\limits_{k =1}^{k=\infty} {\mathrm{b}}_{1}(t_k)\cdot \Delta P_1^{J}(t_k).%
\label{EuLaGen}
\end{eqnarray}%
On the right-hand-side of Eq. (\ref{EuLaGen}) we have that (a) the derivative $dP_1^{ac} /dt $ of the absolutely continuous part is defined everywhere outside a set of Lebesgue measure zero, making the first Lebesgue integral well-defined for $\mathrm{b}_1(t) \in X_{BV}$ and (b) the Borel-Stieltjes measure on the second integral is concentrated on the singular set of $dP_1^{sc}$, which is an uncountable set of Lebesgue measure zero. For a critical point, the right-hand-side of (\ref{EuLaGen}) must vanish for arbitrary $\mathrm{b}_1(t) \in X_{BV}$. Using a sequence of functions $\{ \mathrm{b}_{1\alpha}^{k}(t)\} $ of increasingly small Lebesgue measure \emph{and} concentrated on each discrete discontinuity point of $P_1^{J}(t)$, the limiting value of the right-hand-side of (\ref{EuLaGen}) depends only on the \emph{value} of $\mathrm{b}_{1\alpha}^{k}(t)$ \emph{at} the discontinuity point $t=t_{k}$. The former implies that the coefficient of $\mathrm{b}_{1\alpha}^{k}(t)| _{t=t_k}$ must vanish for each $t_k$, which in turn implies that $P_1^{J}=0$. Therefore we must have that (i) $ P_1(t)$ is \emph{continuous} and therefore the last term of (\ref{EuLaGen}) vanishes.  Finally,
for a sequence $\{ \mathrm{b}_{1\beta}(t)\} $ of increasingly small Lebesgue measure and otherwise arbitrary functions $\mathrm{b}_{1\beta}(t) \in X_{BV} $, the first term on the right-hand-side of Eq. (\ref{EuLaGen}) has a vanishing limit, and in order to vanish the remaining Stieltjes integral of (\ref{EuLaGen}) we must have $P^{sc}=0$, implying (ii) that $P_1(t)$ must be \emph{absolutely continuous}. The resulting absolutely continuous momentum $P_1^{ac}(t)$ must satisfy the Euler-Lagrange equation (\ref{Euler-Lagrange}) Lebesgue-almost-everywhere in order to vanish the first term on the right-hand-side of (\ref{EuLaGen}) with arbitrary $ \mathrm{b}_1 (t)  \in X_{BV}$.

\section{Discussions}
\label{Discussions}
\subsection{Summary of results}

We have generalized the domain of the electromagnetic functional to a local space $X_{BV}$ of absolutely continuous trajectories possessing velocities of bounded variation and shown that the Gateaux derivative of the electromagnetic functional defines \emph{two} partial Lagrangians, one for each particle. The critical-point conditions in  $X_{BV}$ are Euler-Lagrange equations holding Lebesgue-almost-everywhere \textit{plus} the absolutely continuity conditions for the Legendre transforms and the momentum vectors (\ref{momentum1}) of the \emph{two} partial Lagrangians. While the condition of an Euler-Lagrange equation holding Lebesgue-almost-everywhere is weaker than Euler-Lagrange equations everywhere, the generalized Weierstrass-Erdmann conditions became the stronger conditions of \emph{absolute continuity}. 
\par
Besides the importance to electrodynamics, another reason to study other variational structures came after the no-interaction theorem \cite{no-interaction} exposed the severe limitations of the classical mechanical form $S \equiv \int_{0}^{T} \mathcal{L} (\mathbf{x},\dot{\mathbf x}, t)dt $ to describe Lorentz-invariant dynamics. As far as we know, this is the first time variational principles with \emph{two} partial Lagrangians are studied in functional-analytic detail.
 \subsection{Solution based approaches to the electromagnetic problem}
  The earliest numerical methods for the electromagnetic two-body problem were integration schemes adapted from ordinary differential equations (ODE), and are discussed in Ref.\cite{Igor}. These former methods are unsuitable to the variational boundary-value problem and the early attempts of Ref. \cite{Igor} to apply ODE extrapolation schemes to orbits in the neighbourhood of circular orbits are limited by severe instabilities. Subsequently, the present author and collaborators have devised several numerical methods for the electromagnetic two-body problem: \textbf{(i)} The existence results of Refs. \cite{Driver_C,Driver_D} for the repulsive problem motivated our early numerical study \cite{EFY1} of the Wheeler-Feynman equations with an iterative steepest-descent method combined with optimization to find a self-consistent $C^\infty$ solution. The numerical work with initial condition restricted to a line becomes particularly simple because the advanced and the retarded accelerations do not appear in the equations of motion and the accelerations go to zero asymptotically as the particle separations and the delays become large enough. The numerical studies confirmed  the prediction of Refs. \cite{Driver_C,Driver_D} that the solution \emph{restricted} to a space of globally-defined $C^\infty$ orbits is uniquely determined by finite-dimensional data \cite{Driver_C}. \textbf{(ii)} Reference \cite{EFY2} studied the Wheeler-Feynman equations of motion for two equal masses with \textit{opposite} charges again restricted to globally bounded $C^\infty$ collinear orbits. The study used a numerical regularization of the collision and the same optimization of Ref. \cite{EFY1}. As in the former case, the advanced and the retarded accelerations are absent from the equations of motion, a feature of the collinear motion. On the contrary, for opposite charges the accelerations do not vanish asymptotically for globally bounded orbits and moreover the charges fall into each other along globally bounded orbits, at which times both charges reach the speed of light and ``pass through each other'' \cite{EFY2}. The numerically calculated orbits of Ref. \cite{EFY2} using an 18 parameter optimization turn out to be determined uniquely by finite-dimensional data, a reduction analogous  to the predictions of Refs. \cite{Driver_C,Driver_D} for the repulsive case\cite{EFY1}. \textbf{(iii)} Motivated by the regular orbits found in \cite{EFY1, EFY2} and the relative simplicity of the equations for collinear motion, our work of Ref. \cite{Chaos15} studied a $C^\infty$ orbit in the case of arbitrary masses using the Hamilton-Jacobi theory.  Such extension to the case of arbitrary masses has a physical application in the quantization of the electromagnetic two-body problem. \textbf{(iv)} Still for the collinear motion of opposite charges, Ref. \cite{Politi} studied the Wheeler-Feynman equations in a setup including a (constant) external electric field to balance the attraction and create a fixed point and a Hopf bifurcation. The bifurcating periodic orbits are studied in \cite{Politi} by introducing an ad hoc surrogate system with an adjustable period into the analysis, which transforms future data into past data by exploiting the periodicity, thus obtaining a system with delays only. The periodic orbits are studied with the integrator for differential-algebraic equations with state-dependent delay RADAR5. \textbf{(v)} For generic boundary data the two-body problem with advanced and delayed arguments is unsuitable for direct integration as an initial value problem and moreover in three space dimensions the functional differential equation is of neutral type. In Ref. \cite{CAM}  the present author and collaborators studied the attractive two-body problem as a boundary value problem solved numerically with a finite element method with a $C^1$ smooth Galerkian. The method\cite{CAM} still could not deal with velocity discontinuities and had to adjust the past and the future boundary segments to yield $C^1$ solutions. \textbf{(vi)} Finally, an example of the effectiveness of the variational principle when compared to the neutral differential-delay equation formulation is our first numerical method capable of dealing with velocity discontinuities for boundaries having the shortest time separation between boundary segments across the lightcone \cite{Daniel}.  A property of the shortest-length boundaries setup of Ref. \cite{Daniel} is that one particle's retarded position falls in the past boundary segment while the other particle's advanced position falls in the future boundary segment, thus reducing the functional equation to a shooting problem for a non-autonomous ODE. The method \cite{Daniel} starts from boundary segments that include velocity discontinuities and marches an ODE integrator from the initial point to the endpoint, stopping to enforce the Weierstrass-Erdmann conditions at each breaking point. Solutions with discontinuous velocities are exhibited in Ref.\cite{Daniel} for several sets of infinite-dimensional boundary data.
  
  \subsection{Advantages and Limitations}
 
 An advantage of the variational formulation is that one is dealing with a problem of functional minimization \cite{JMP2009,CAM}, unlike the case of breaking points for generic neutral differential-delay equations\cite{BellenZennaro}. In the variational problem the breaking points become Weierstrass-Erdmann corners satisfying the generalized Weierstrass-Erdmann absolute continuity conditions discussed below Eq. (\ref{EuLaGen}). The functional minimization advantage allows the use of optimization methods in the numerical studies \cite{CAM, Daniel, EFY1, EFY2}. A limitation still to be overcome is the construction of a complete normed space domain for the electromagnetic functional, after which the mountain pass theorem and Ekeland's variational principle \cite{Ambrosetti} can be used to study solutions.
% \par
  %Our variational approach is constructed in the neighbourhood of some given sub-luminal orbit and lacks a complete normed space containing sub-luminal orbits only. Such a domain for the electromagnetic functional would be necessary, for example, in order to apply the mountain pass theorem \cite{Ambrosetti} and so far it has never been constructed.

\section{Appendix}

In the main text we avoided using the relativistic four-space constructions of \cite{JMP2009} for simplicity. On the other hand, this natural four-space associated with the electromagnetic problem has advantages, allowing a particularly simple derivation of the second Weierstrass-Erdmann condition.  The electromagnetic functional (\ref{aFokker}) has a natural definition in the larger space\cite{JMP2009} which is discussed below to motivate an extension of the domain $X_{BV}$ to \textit{four-space}. In a nutshell, the time variable of 
each trajectory of $X_{BV}$ can be replaced by an arbitrary monotonically increasing  function $t_i(s)$ of another independent variable $s$, and the functional minimization can be extended to include these two extra arbitrary functions.  
 Trajectories of the larger ambient space $ T_{BV} \otimes X_{BV} \equiv\mathbb{R} \times \mathbb{R}^3$ are absolutely continuous functions of the real variable $s \rightarrow  Q_i (s) \equiv (t_i(s), \mathbf{x}_i(s)) $ possessing monotonically increasing time-components $t_i(s)$ for $i=1,2$. The functions $ (t_i(s), \mathbf{x}_i(s)) $ are obtained by integration of the respective derivatives of bounded variation,
\begin{eqnarray}
({t}_i^{\prime}(s),{\mathbf{x}}_i^{\prime}(s)).
 \label{defS}
\end{eqnarray}
In equation (\ref{defS}) and henceforth in this appendix, a prime denotes derivative respect to the real variable $s$, to distinguish it from the dot denoting time-derivative in the main text.  The chain rule holds for differentiation respect to the absolutely continuous time, yielding a derivative defined almost everywhere by 
\begin{eqnarray}
\dot{\mathbf{x}}_i = {\mathbf{x}}_i^{\prime}(s)/{t}_i^{\prime}(s), 
\label{Chain_rule}
\end{eqnarray}
wherever ${\mathbf{x}}_i^{\prime}(s)$ and ${t}_i^{\prime}(s)>0$ are defined.
\par
 The state-dependent light-cone condition (\ref{light-cone}) has a natural extension in the larger space $T_{BV} \otimes X_{BV}$, i.e.
\begin{equation}
t_j(s_j^{\pm})= t_i(s) \pm \vert {\mathbf{x}_i}(s)-{\mathbf{x}}_{j}(s_j^{\pm}) \vert,
\label{light-cone-S}
\end{equation}
as well as the electromagnetic functional, whose formulas generalize from (\ref{aFokker}) and (\ref{L2}) to
\begin{align} \label{aFokkerS}
S \equiv &\negthickspace \int_{s_{O_1^{+}}}^{s_{L_{2}}}\negthickspace {\widetilde{M}}_{2}ds_{2}
+\negthickspace \int_{s_{O_{1}}}^{s_{L_2^{-}}}\negthickspace  \widetilde{M}_{1}ds_{1}+\underbrace{ \int_{s_{O_{1}}}^{s_{L_2^{+}}} \negthickspace \mathbb{I}_{12}^-ds_{1}}
+ \underbrace{\negthickspace \int_{s_{O_1}}^{s_{L_2^-}}\negthickspace  \widetilde{I}_{12}^+ds_{1}},\\
& \qquad \qquad   \qquad\qquad\qquad\qquad \quad \; \, \, \; \;  \;\Updownarrow \qquad\qquad\quad\; \; \Updownarrow \notag\\
\negthickspace \negthickspace =&\negthickspace \int_{s_{O_{1}}}^{s_{L_2^{-}}} \negthickspace \widetilde{M}_{1}ds_{1}
+\negthickspace \int_{s_{O_1^{+}}}^{s_{L_{2}}}\negthickspace \negthickspace \widetilde{M}_{2}ds_{2}
+ \overbrace{ \int_{s_{O_1^-}}^{s_{L_2}} \negthickspace \widetilde{I}_{21}^+ds_{2}}
+ \overbrace{ \int_{s_{O_1^+}}^{s_{L_{2}}} \negthickspace \widetilde{I}_{21}^-ds_{2}}.
\label{L2S}
\end{align}
%\begin{align} \label{aFokkerS}
%\widetilde{S} \equiv &\negthickspace \int_{0}^{1}\negthickspace {\widetilde{M}}_{2}ds_{2}
%+\negthickspace \int_{0}^{1}\negthickspace  \widetilde{M}_{1}ds_{1}+\underbrace{ \int_{0}^{1} \negthickspace \mathbb{I}_{12}^-ds_{1}}
%+ \underbrace{\negthickspace \int_{0}^{1}\negthickspace  \widetilde{I}_{12}^+ds_{1}},\\
%& \qquad \qquad   \qquad\qquad\qquad \; \, \, \; \;  \;\Updownarrow \quad\qquad\quad\; \; \Updownarrow \notag\\
%\negthickspace \negthickspace =&\negthickspace \int_{0}^{1} \negthickspace \widetilde{M}_{1}ds_{1}
%+\negthickspace \int_{0}^{1}\negthickspace \negthickspace \widetilde{M}_{2}ds_{2}
%+ \overbrace{ \int_{0}^{1} \negthickspace \widetilde{I}_{21}^+ds_{2}}
%+ \overbrace{ \int_{0}^{1} \negthickspace \widetilde{I}_{21}^-ds_{2}}.
%\label{L2S}
%\end{align}
Again, as with Eqs. (\ref{aFokker}) and (\ref{L2}), the vertical arrows linking an integral of Eq. (\ref{aFokkerS}) to an integral of Eq. (\ref{L2S}) indicate equality under a change of the integration variable using the state-dependent condition (\ref{light-cone-S}). 
The Radon-Nikodym derivative to change from the variable $s_i$ to either  the other particle's advanced or retarded parameter $s_{j\pm}$ is obtained from the implicit condition (\ref{light-cone-S}), 
\begin{equation}
\frac{d s_{j\pm}}{d s_{i}}= \frac{({t}_i^{\prime} \pm \mathbf{n}_{ij}^{\pm} \cdot {{\mathbf{x}}}_i^{\prime})}{({t^{\prime}}_j^{\pm} \pm \mathbf{n}_{ij}^{\pm} \cdot {{\mathbf{x^{\prime}}}}_{j}^{\pm})} ,  \label{t2diffS1}
\end{equation}
 which is defined almost everywhere in $T_{BV} \otimes X_{BV}$. In Eq. (\ref{t2diffS1}) the $\pm$  superscripts indicate that the derivatives in the denominator of Eq. (\ref{t2diffS1}) are to be evaluated at either the advanced or the retarded deviating argument $s_j^{\pm}$ determined by condition (\ref{light-cone-S}), and unit vector $\mathbf{n}_{ij}^{\pm } $ is still defined by Eq. (\ref{unit-n}).  The partial Lagrangians defined by the generalized integrands of (\ref{aFokkerS}) and (\ref{L2S}) with trajectory $i$ kept fixed are 
 \begin{eqnarray}
\widetilde{L}_i(t_i(s),\mathbf{x}_i(s), t_i^{\prime}(s), {\mathbf{x}}_i^{\prime} (s),s) \equiv  \widetilde{M}_i + \widetilde{I}_{ij}^{-} +\widetilde{I}_{ij}^{+} \label{defpartial-S}, 
\end{eqnarray}
where
\begin{eqnarray}
\widetilde{M}_i\equiv m_i ( {t}_i^{\prime}-\sqrt{{t^{\prime}_i}^2-{{\mathbf{x}}_ i^{\prime}}^2 }   \;), \label{MUS}
\qquad \\
\widetilde{I}^\pm_{ij}(Q_i,\dot{Q}_i,{Q}_{j}^{\pm},{\dot{Q}}_{j}^{\pm})
\equiv \frac{({t^{\prime}}_i {t^{\prime}}_{j\pm}-{\mathbf{x^{\prime}}}_i \cdot {\mathbf{x^{\prime}}}_{j}^{\pm}) }{2r_{ij\pm}( {t^{\prime}}_{j}^{\pm} \pm \mathbf{n}_{ij}^{\pm}\cdot {\mathbf{x^{\prime}}}_{j}^{\pm})  }.  \label{EUS}
\end{eqnarray}
This surprising extension is due to the special form of the electromagnetic functional, for which the tilde functions (\ref{MUS}) and (\ref{EUS}) can be expressed using (\ref{MU}) and (\ref{EU}) with an extra linear scaling by $t_i^{\prime}(s)$, i.e.,
\begin{eqnarray}
\widetilde{L}_i(t_i(s),\mathbf{x}_i(s), t_i^{\prime}(s), {\mathbf{x}}_i^{\prime} (s),s) = 
(\mathcal{M}_i + \mathcal{I}_{ij}^{-} +\mathcal{I}_{ij}^{+})t_i^{\prime}(s) \notag \\ =t_i^{\prime}(s)\mathcal{L}_i (\mathbf{x}_i(t_i(s)), \dot{\mathbf{x}}_i(t_i(s)),t_i(s)), \label{homogeneous}
\end{eqnarray}
precisely the necessary Radon-Nikodym derivative for the change of variables to reduce the integrals (\ref{aFokkerS}) and (\ref{L2S}) respectively to integrals (\ref{aFokker}) and (\ref{L2}). Extension to $T_{BV} \otimes X_{BV}$ is possible because of two facts: (i) the linear dependence on the first derivatives of the electromagnetic functional (\ref{aFokkerS}) and (\ref{L2S}), and (ii) the only other dependence on derivatives is in the square root definition (\ref{MUS}) of $\widetilde{M}_i$, which is a homogeneous function of $t_i^{\prime}(s)$. The above formulas reduce to the respective $X_{BV}$ formulas used in the main text for the special case when $ t_i (s) \equiv s $ for $i=1,2$.

\par

\begin{thm} \label{theorem3}  The Legendre transforms of the partial Lagrangians (\ref{partial}) are functions of bounded variation in $T_{BV} \otimes X_{BV}$, in addition to the momentum of each partial Lagrangian, which are the same defined by Eq. (\ref{P1}) with $i=1,2$.

\begin{proof} The proof is a generalization of the proof given below Eq. (\ref{EuLaGen}). The perturbed orbits in $T_{BV} \otimes X_{BV}$ are Eqs. (\ref{perturb}), (\ref{perturb2}), (\ref{boundaries}) and (\ref{boundaries2}) with two extra monotonic time-components, that we henceforth indicate using sans-serif font to differentiate it from the main text, i.e., $\mathsf{b}_i (s) \equiv(t_i(s), \mathrm{b}_i(s))$ and $\mathsf{x}_i (s) \equiv (t_i(s), \mathbf{x}_i(s))$ for $i=1,2$. The generalization of (\ref{EuLaGen}) to $T_{BV} \otimes X_{BV}$ is

\begin{eqnarray}
\delta \widetilde{S}_i &=&\int_{s_i^0}^{s_i^f} {{\mathsf{b}}}_{i}\cdot \lbrack 
\frac{\partial \widetilde{L}_i }{\partial \mathsf{x}_{i}}-\frac{d\widetilde{P}_i^{ac}}{ds}]ds  -\int \mathsf{b}_i \cdot d\widetilde{P}_i^{sc}\notag \\
&&-\sum\limits_{k =1}^{k=\infty} \mathsf{b}_{i}(t_k)\cdot \Delta \widetilde{P}_i^{J}(t_k),%
\label{EuLaGenLarge}
\end{eqnarray}
where $(s_1^0,  s_1^f)\equiv (s_{O_1},s_{L_2^-})$ and $(s_2^0, s_2^f) \equiv ( s_{O_1^+} , s_{L_2})$.  In Eq. (\ref{EuLaGenLarge}) the center dot denotes the scalar product of $\mathbb{R} \times \mathbb{R}^3 \equiv \mathbb{R}^4$  and the partial derivative is taken with respect to the Euclidean norm of the same $\mathbb{R}^4$.
The momentum of $T_{BV} \otimes X_{BV}$ appearing in  Eq. (\ref{EuLaGenLarge}) is $\widetilde{P}_i \equiv \frac{\partial \widetilde{L}_i }{\partial  {\mathsf{x}}_i^{\prime}}(\mathsf{x}_i(s), {\mathsf{x}}_i^{\prime} (s),s)$, which is a function of bounded variation by the arguments below Eq. (\ref{EuLaGen}). 
The monotonically increasing  time component of $T_{BV} \otimes X_{BV}$ induces an extra, zeroth component for each particle's momentum (\ref{momentum1}), namely
\begin{eqnarray}
\widetilde{P}_i^{0} \equiv \frac{\partial \widetilde{L}_i }{ \partial t_i^{\prime}} = m_i-\frac {m_i{t^{\prime}_i}} {\sqrt{{t^{\prime}_i}^2-{{\mathbf{x}}_ i^{\prime}}^2 }}  + \frac{ {t^{\prime}}_{j-} }{2r_{ij-}( {t^{\prime}}_{j}^{-} -\mathbf{n}_{ij}^{-}\cdot {\mathbf{x^{\prime}}}_{j}^{-})  }+ \frac{ {t^{\prime}}_{j+} }{2r_{ij+}( {t^{\prime}}_{j}^{+} + \mathbf{n}_{ij}^{+}\cdot {\mathbf{x^{\prime}}}_{j}^{+})  }.  
\label{WE2S}
\end{eqnarray}
 Dividing the numerator and the denominator of each of the last three terms of Eq. (\ref{WE2S}) by either $t^{\prime}_i \in BV$ or $t^{\prime}_{j\pm} \in BV$  and using the chain rule $\dot{\mathbf{x}}_i = {\mathbf{x}}_i^{\prime}/{t}_i^{\prime}$ to pass from $T_{BV} \otimes X_{BV}$ to $X_{BV}$ yield
\begin{eqnarray}
\widetilde{P}_i^{0} \equiv  m_i - \frac {m_i}{\sqrt{1 - \dot{\mathbf{x}}_i^2}}  + U_j(t,\mathbf{x}_i),
\label{WE2T}
\end{eqnarray}
where $U_j(t,\mathbf{x}_i)$ is defined by (\ref{DEFU}).
A direct calculation shows that $\widetilde{P}_i^{0}$ defined by (\ref{WE2T}) is minus the Legendre transform of the partial Lagrangian $\mathcal{L}_i$ of Eq. (\ref{partial}), 
\begin{eqnarray}
 \widetilde{P}_i^{0}= - ( \dot{\mathbf{x}}_i \cdot \frac {\partial \mathcal{L}_i}{\partial \mathbf{x}_i} - \mathcal{L}_i ) . \label{Legendre_equality}
 \end{eqnarray}
Once the Legendre transform of $\mathcal{L}_i(\mathbf{x}_i, \dot{\mathbf{x}}_i, t)$ is equal to minus the momentum of $T_{BV} \otimes X_{BV}$, it must be an absolutely continuous function in $T_{BV} \otimes X_{BV}$ by the argument below Eq. (\ref{EuLaGen}). Since $t_i(s)$ is monotonically increasing and therefore an invertible function, (\ref{WE2T}) must be an absolutely continuous function of $t$ as well, proving the first part of the statement.
\par
The momentum components in $T_{BV} \otimes X_{BV}$ are 
\begin{eqnarray}
\widetilde{P}_{\mathbf{x}_i} (t) \equiv   \frac{\partial \widetilde{L}_i}{\partial \mathbf{x}^{\prime}_i }   =\frac{m_{i} \mathbf{x}^{\prime}_{i}}  { \sqrt{ {t^{\prime}_i}^2 - {\mathbf{x}_i^{\prime}}^2  }   }
-\frac{    {{\mathbf{x}}}_{j}^{\prime -}      }{    2r_{ij}^{-}(1-\mathbf{n}_{ij}^{-}\cdot {\mathbf{x}}_{j}^{\prime -})          } 
-\frac{     {{\mathbf{x}}}_{j}^{\prime +}}{2r_{ij}^{+}(1+\mathbf{n}_{ij}^{+}\cdot {\mathbf{x}}_{j}^{\prime +})     }.
\label{momentumtilde} 
\end{eqnarray} 
Again, dividing the numerator and the denominator of the last three fractions of Eq. (\ref{momentumtilde}) by either $t^{\prime}_i \in BV$ or $t^{\prime}_{j\pm} \in BV$  and using the chain rule $\dot{\mathbf{x}}_i = {\mathbf{x}}_i^{\prime}/{t}_i^{\prime}$ to pass from $T_{BV} \otimes X_{BV}$ to $X_{BV}$ yield the same momentum of $X_{BV}$, i.e., Eq.  (\ref{momentum1}), completing the proof. 
\end{proof}
\end{thm}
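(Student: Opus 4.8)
The plan is to work in the enlarged space $T_{BV}\otimes X_{BV}$ and to exploit the homogeneity relation (\ref{homogeneous}), $\widetilde{L}_i=t_i'(s)\,\mathcal{L}_i(\mathbf{x}_i,\dot{\mathbf{x}}_i,t_i)$ with $\dot{\mathbf{x}}_i=\mathbf{x}_i'/t_i'$, which is available because the electromagnetic functional is linear in the velocities apart from the homogeneous term $\widetilde{M}_i$. The key observation is that the deviating arguments $s_j^{\pm}$ are fixed by the light-cone constraint (\ref{light-cone-S}) in terms of $(t_i(s),\mathbf{x}_i(s))$ alone, with no dependence on the derivatives $(t_i',\mathbf{x}_i')$; hence, differentiating $\widetilde{L}_i$ at fixed $(t_i,\mathbf{x}_i,s)$ and using $\partial\dot{\mathbf{x}}_i/\partial\mathbf{x}_i'=1/t_i'$ and $\partial\dot{\mathbf{x}}_i/\partial t_i'=-\mathbf{x}_i'/{t_i'}^{2}$ gives at once
\[
\widetilde{P}_{\mathbf{x}_i}=\frac{\partial\widetilde{L}_i}{\partial\mathbf{x}_i'}=\frac{\partial\mathcal{L}_i}{\partial\dot{\mathbf{x}}_i}=P_i,
\qquad
\widetilde{P}_i^{0}=\frac{\partial\widetilde{L}_i}{\partial t_i'}=\mathcal{L}_i-\dot{\mathbf{x}}_i\cdot P_i=-\bigl(\dot{\mathbf{x}}_i\cdot P_i-\mathcal{L}_i\bigr),
\]
which is both Eq. (\ref{Legendre_equality}) and the assertion that the spatial momentum of $T_{BV}\otimes X_{BV}$ is the very momentum (\ref{momentum1}) of the main text. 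Equivalently one writes $\widetilde{P}_i^{0}$ in the raw form (\ref{WE2S}), divides numerators and denominators of its last three terms by $t_i'$ or $t'_{j\pm}$, and uses the chain rule to reach the reduced form (\ref{WE2T}), $\widetilde{P}_i^{0}=m_i-m_i/\sqrt{1-\dot{\mathbf{x}}_i^2}+U_j(t,\mathbf{x}_i)$; the same reduction turns (\ref{momentumtilde}) into (\ref{momentum1}).

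Second, the bounded-variation claim then follows verbatim from the argument of Lemma \ref{LBV}. Along a non-collisional sub-luminal orbit of $T_{BV}\otimes X_{BV}$ one has $r_{ij}^{\pm}>0$, ${t_i'}^{2}-{\mathbf{x}_i'}^{2}>0$, and $t'_{j\pm}\pm\mathbf{n}_{ij}^{\pm}\cdot\mathbf{x}'_{j\pm}>0$, so every denominator appearing in (\ref{WE2S}), (\ref{momentumtilde}), (\ref{WE2T}) and (\ref{momentum1}) is bounded away from zero; each of $\widetilde{P}_i^{0}$ and $\widetilde{P}_{\mathbf{x}_i}$ is then a finite algebraic expression --- sums, products, quotients with non-vanishing denominators, and one square root --- built from the absolutely continuous positions and unit vectors and from the first derivatives $(t_i',\mathbf{x}_i',t'_{j\pm},\mathbf{x}'_{j\pm})$, all of which are of bounded variation. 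The Banach-algebra property of $BV$ cited from \cite{Ambrosio} then gives that $\widetilde{P}_i^{0}$ and $\widetilde{P}_{\mathbf{x}_i}$ are of bounded variation in $s$; and since $t_i(s)$ is continuous, strictly increasing, and hence invertible with monotone continuous inverse, composition preserves bounded variation, so the Legendre transform $-\widetilde{P}_i^{0}$ is of bounded variation as a function of $t$ as well. This proves the theorem.

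With the two $BV$ momenta in hand one may then repeat the Stieltjes integration by parts of Section \ref{Stieltjes} componentwise in $\mathbb{R}\times\mathbb{R}^3$ to obtain (\ref{EuLaGenLarge}), and by the test-function argument given below Eq. (\ref{EuLaGen}) upgrade bounded variation to absolute continuity at a critical point, which is the second (energy) Weierstrass--Erdmann condition. I expect the crux to be the first step: pinning down the homogeneity identity (\ref{homogeneous}) carefully enough that $\partial/\partial t_i'$ and $\partial/\partial\mathbf{x}_i'$ genuinely ignore the deviating arguments, and verifying that the chain-rule computation of $\widetilde{P}_i^{0}$ really collapses to the compact form (\ref{WE2T}) after the $t_i'$, $t'_{j\pm}$ cancellations --- the physical content being that the momentum conjugate to the newly dynamical time coordinate is minus the partial energy. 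Once that identity is in hand, the bounded-variation conclusion is an immediate repetition of Lemma \ref{LBV}.
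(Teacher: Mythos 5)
Your proposal is correct and follows essentially the same route as the paper: extend to the four-space $T_{BV}\otimes X_{BV}$, identify the zeroth momentum component $\widetilde{P}_i^{0}$ with minus the Legendre transform, reduce the spatial components of Eq.~(\ref{momentumtilde}) to Eq.~(\ref{momentum1}), and obtain bounded variation from the Banach-algebra property of $BV$ exactly as in Lemma~\ref{LBV}, with absolute continuity at a critical point following from the test-function argument below Eq.~(\ref{EuLaGen}). The one genuine (and welcome) difference is that you obtain the key identity~(\ref{Legendre_equality}) together with $\widetilde{P}_{\mathbf{x}_i}=P_i$ in a single stroke by differentiating the scaling relation~(\ref{homogeneous}) with respect to $(t_i',\mathbf{x}_i')$ at fixed deviating arguments (an Euler-homogeneity argument, justified by your correct observation that $s_j^{\pm}$ depends only on positions), whereas the paper computes $\widetilde{P}_i^{0}$ explicitly in Eq.~(\ref{WE2S}), reduces it to Eq.~(\ref{WE2T}) by dividing through by $t_i'$ and $t_{j\pm}'$, and then verifies~(\ref{Legendre_equality}) by direct calculation; your route is cleaner, and your attribution of the $BV$ claim to the Lemma~\ref{LBV} argument is more precise than the paper's citation of the passage below Eq.~(\ref{EuLaGen}).
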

\par
The last implication of  (\ref{EuLaGenLarge}) is that the generalized momenta must satisfy the Euler-Lagrange equations almost-everywhere, 
\begin{eqnarray}
\frac{\partial \widetilde{L}_i }{\partial \mathsf{x}_{i}}-\frac{d}{ds}(\frac{%
\partial \widetilde{L}_i }{\partial \mathsf{x}^{\prime}_i })=0, \label{general_Euler-Lagrange}
\end{eqnarray}
where again the partial derivatives are taken with respect to the Euclidean norm of $\mathbb{R}^4$ for both $\mathsf{x}$ and $\mathsf{x}^{\prime}$.  The scaling property (\ref{homogeneous}) implies that 
\begin{eqnarray}
\frac{\partial \widetilde{L_i}}{\partial \mathsf{x}_i}= t^{\prime}_i\frac{\partial \mathcal{L}_i}{\partial \mathsf{x}_i},  \label{scaling}
\end{eqnarray}
for the four components of Eq. (\ref{general_Euler-Lagrange}) in  $T_{BV} \otimes X_{BV}$. For the spatial components of Eq. (\ref{general_Euler-Lagrange}) Theorem \ref{theorem3} gives that $ \frac{\partial \widetilde{L}_i}{\partial \mathbf{x}^{\prime}_i } = \frac{\partial \mathcal{L}_i}{\partial \dot{\mathbf{x}}}$, which substituted into (\ref{general_Euler-Lagrange}) yields
\begin{eqnarray}
\frac{\partial \mathcal{L}_i }{\partial \mathbf{x}_{i}}-\frac{d}{t^{\prime}_i ds}(\frac{%
\partial \mathcal{L}_i }{\partial \dot{\mathbf{x}}_i })=\frac{\partial \mathcal{L}_i }{\partial \mathbf{x}_{i}}-\frac{d}{dt}(\frac{%
\partial \mathcal{L}_i }{\partial \dot{\mathbf{x}}_i })=0, \label{spatial_Euler-Lagrange}
\end{eqnarray}
where we have used $t^{\prime}_i$ as the Radon-Nikodym derivative to pass from $t^{\prime}_i ds$ to $dt$.
Equation (\ref{spatial_Euler-Lagrange}) is the same Euler-Lagrange equation (\ref{Euler-Lagrange}) of $X_{BV}$.
\par
Finally, we show that the time component of (\ref{general_Euler-Lagrange}) is not a new condition. Substituting Eqs. (\ref{Legendre_equality}) and (\ref{scaling}) into the time component of Eq. (\ref{general_Euler-Lagrange}) yields
\begin{eqnarray}
t_i^{\prime} \frac{\partial L_i}{\partial t}-\frac{d}{ds}(\mathcal{L}_i-\dot{\mathbf{x}_i} \cdot \frac {\partial \mathcal{L}_i}{\partial \dot{\mathbf{x}}_i})=0. \label{fourth}
\end{eqnarray}
Using $t^{\prime}_i$ as a Radon-Nikodym derivative to change the total derivative of (\ref{fourth}) to $\frac{d}{ds}=t_i^{\prime}  \frac {d}{dt}$ and dividing out the non-zero $t_i^{\prime}$ factor transform Eq. (\ref{fourth}) into
\begin{eqnarray}
 \frac{\partial L_i}{\partial t}-\frac{d}{dt}(\mathcal{L}_i-\dot{\mathbf{x}_i} \cdot \frac {\partial \mathcal{L}_i}{\partial \dot{\mathbf{x}}_i})=0,
\end{eqnarray}
which is identically zero because $\mathcal{L}_i(\mathbf{x}_i, \dot{\mathbf{x}_i},t)$ satisfies (\ref{Euler-Lagrange}).

\section{Acknowledgements}

This work was partially supported by the FAPESP grant 2011/18343-6.


\begin{thebibliography}{99}

\bibitem{WheeFey}J. A. Wheeler and R. P. Feynman, \emph{Interaction with the Absorber as the Mechanism of Radiation}, \textit{Reviews of Modern Physics} {\bf 17} (1945), 157-181; J. A. Wheeler and R. P. Feynman, \emph{Classical Electrodynamics in Terms of Interparticle Action}, \textit{Reviews of Modern Physics} {\bf 21} (1949), 425-433.

\bibitem{Mallet-Paret1}J. Mallet-Paret, \emph{Generic properties of retarded functional differential equations}, \textit{Bull. Amer. Math. Soc.} \textbf{81} (1975), 750-752.

\bibitem{Mallet-Paret2}J. Mallet-Paret, \emph{Generic periodic solutions of functional differential equations}, \textit{Journal of Differential Equations} \textbf{25} (1977), 163-183.

\bibitem{JackHale} J. Hale, \emph{Theory of Functional Differential Equations},
Springer-Verlag (1977), J. Hale and S. M. Verduyn Lunel \emph{Introduction to
Functional Differential Equations}, Springer-Verlag, New York (1993).


\bibitem{Hans-Otto} O. Diekman, S.A. van Gils, S.M. Verduyn Lunel and
Hans-Otto Walther \emph{Delay Equations}, Springer-Verlag, New York (1995).

\bibitem{Mallet-Paret3}J. Mallet-Paret and R. Nussbaum, \emph{Boundary layer phenomena for differential-delay equations with state-dependent time lags}, \textit{J. Reine Angew. Math.} \textbf{477} (1996), 129-197.


\bibitem{HKWW06} F. Hartung, T. Krisztin, H.-O. Walther, and J. Wu \emph{Functional Differential Equations With
State-Dependent Delays: Theory And Applications} in \emph{Handbook
of Differential Equations} \textbf{3}, Elsevier, Amsterdam (2006).

\bibitem{Nicola1}G. Fusco and N. Guglielmi, \emph{A regularization for discontinuous differential equations with application to state-dependent 
delay differential equations of neutral-type}, \textit{Journal of Differential Equations} \textbf{250} (2011), 3230-3279.

\bibitem{Angelov}V. G. Angelov and D. D. Bainov \emph{Absolutely continuous global solutions of the initial value problem for neutral functional differential equations of mixed type}, \textit{Rendiconti del Circolo Matematico di Palermo}, \textbf{30} (1981), 435-452.


%\bibitem{Mallet-Paret1}J. Mallet-Paret, \emph{Generic periodic solutions of functional differential equations}, \textit{Journal of Differential Equations} \textbf{25} (1977) 163-183.



\bibitem{Bellman}R. Bellman and K. Cooke, \emph{Differential-difference equations},  Academic Press, New York (1963).

\bibitem{Driverbook}R. Driver, \emph{Ordinary and Delay Differential Equations}, Springer Verlag, New York (1977)



\bibitem{Nicola2}N. Guglielmi and E. Hairer, \emph{Numerical approaches for state-dependent neutral-delay equations with discontinuities}, \textit{Mathematics and Computers in Simulation} \textbf{95} (2013), 2-12.


\bibitem{BellenZennaro} A. Bellen and M. Zennaro, \emph{Numerical Methods
for Delay Differential Equations}, Oxford University Press, New York (2003).



\bibitem{Schwarz} K. Schwarzschild \textit{Gottinger Nachrichten} \textbf{128}
(1903), 132-141;  H. Tetrode \textit{Z. Phys.} \textbf{10}
(1922), 317-327; A.D. Fokker \textit{Z. Phys.} \textbf{58} (1929), 386-393; A.D. Fokker
\textit{Physica } \textbf{9} (1929), 33-42; A.D. Fokker \textit{Physica }
\textbf{12} (1932), 145-152.



\bibitem{JMP2009} J. De Luca, \emph{Variational principle for the Wheeler-Feynman electrodynamics}, \textit{Journal of Mathematical Physics} \textbf{50} (2009), 062701 (24pp).


\bibitem{minimizer} J. De Luca, \emph{Minimizers with discontinuous velocities for the electromagnetic variational method}, \textit{Physical Review E} \textbf{82} (2010), 026212 (9pp).


\bibitem{cinderela} J. De Luca, {\em Variational Electrodynamics of Atoms}, \textit{Progress In Electromagnetics Research B} {\bf 53} (2013), 147-186.


\bibitem{CAM}J. De Luca, A. R. Humphries and S. B. Rodrigues, \emph{Finite-element boundary value integration of Wheeler-Feynman electrodynamics}, \textit{Journal of Computational and Applied Mathematics} \textbf{236} (2012), 3319-3337.



\bibitem{Gelfand} I. M. Gelfand and S. V. Fomin, \emph{Calculus of Variations}, Dover, New York (2000).

\bibitem{Troutman} J. L. Troutman, {\em Variational Calculus and Optimal Control}, Springer-Verlag, New York (1996).

\bibitem{Schoenberg} M. Sch\"{o}nberg, \emph{Classical Theory of the Point Electron}, \textit{Physical Review} \textbf{69} (1946), 211-224.

\bibitem{Schild} A. Schild, \emph{Electromagnetic Two-Body Problem}, \textit{Phys. Rev.} \textbf{131} (1963), 2762-2766.

\bibitem{Martinez} D. J. Louis-Martinez Junior, \emph{Exact solutions of the relativistic many-body problem}, \textit{Phys. Lett. A} \textbf{320} (2003), 103-108.




\bibitem{Driver_A}R. Driver, \emph{A mixed neutral system}, \textit{Nonlinear Analysis, Theor., Methods \& Applications}, \textbf{8} (1984), 155-158.

\bibitem{Driver_B}R. D. Driver, \emph{A neutral system with state-dependent delay}, \textit{Journal of Differential Equations} \textbf{54} (1984), 73-86.





\bibitem{Driver_C}R. D. Driver, \emph{Can the future influence the present?}\textit{Physical Review D} \textbf{19} (1979), 1098-1107.


\bibitem{Driver_D}J. T. Hoag and R. D. Driver, \emph{A delayed-advanced model for the Electrodynamics two-body problem},  \textit{Nonlinear Analysis, Theor., Methods \& Applications}, \textbf{15} (1990), 165-184.


\bibitem{Murdock}J. A. Murdock, \textit{Annals of Physics} \textbf{84} (1974), 432-439.




\bibitem{Daniel}D. C. De Souza and J. De Luca, \emph{Solutions of the Wheeler-Feynman equations with discontinuous velocities}, \textit{Chaos: An Interdisciplinary Journal of Nonlinear Science} \textbf{25} (2015), 013102 (10pp).



\bibitem{Gordon1} W. B. Gordon, \emph{Conservative Dynamical Systems involving strong forces}, \textit{Transactions of the American Mathematical Society} \textbf{204} (1975), 113-135.




\bibitem{Gordon2} W. B. Gordon, \emph{A minimizing property of Keplerian orbits}, \textit{Americam Journal of Mathematics} \textbf{99} (1977), 961-971.


\bibitem{Leoni}  Giovanni Leoni,  \emph{A First Course in Sobolev Spaces}, Graduate Studies in Mathematics, American Mathematical Society, Providence (2009). 


%\bibitem{double-slit} J. De Luca, {\em Electromagnetic models to complete quantum mechanics}, {\it Journal of  Computational and Theoretical Nanoscience} {\bf 8}, 1040-1051 (2011), and arXiv:1006.3197v7 (2010).



\bibitem{Ambrosio} L. Ambrosio, N. Fusco and D. Pallara, \emph{Functions of
bounded variation and free discontinuity problems}, Oxford Mathematical
Monographs. The Clarendon Press, Oxford University Press, New York (2000).

\bibitem{BelaNagy} Frigyes Riesz and B\`{e}la Sz.-Nagy, \emph{Functional Analysis}, Dover, 
New York (1990).

\bibitem{Folland}G. B. Folland, \emph{Real Analysis: Modern Techniques and Their Applications}, Second Edition, John Wiley and Sons, New York (1999). % ISBN 0-471-31716-0. 


\bibitem{no-interaction} D. G. Currie, T. F. Jordan, and E. C. G. Sudarshan, \textit{Rev. Mod. Phys.}  {\textbf 35} (1963), 350-375; 
G. Marmo, G. N. Mukunda, and E. C. G. Sudarshan, \textit{Phys. Rev. D} \textbf{30} (1984), 2110-2116.


\bibitem{Ambrosetti} A. Ambrosetti and D. Arcoya, \emph{An introduction to Nonlinear Functional Analysis and Elliptic Problems}, \textit{Progress in Nonlinear Differential Equations and their Applications}, Birkh\"{a}user, Boston (2011).

\bibitem{Igor} I. N. Nikitin and J. De Luca,  \emph{Numerical methods for the three-dimensional two-body problem in the action-at-a-distance electrodynamics}, \textit{International Journal of Modern Physics C} \textbf{12} (2001), 739-750.


\bibitem{EFY1}E. B. Hollander and J. De Luca, \emph{Two-degree-of-freedom Hamiltonian for the time-symmetric two-body problem of the relativistic action-at-a-distance electrodynamics}, \textit{Physical Review E} \textbf{67} (2003), 026219 (15pp).


\bibitem{EFY2}E. B. Hollander and J. De Luca, \emph{Regularization of the collision in the electromagnetic two-body problem}, \textit{Chaos: An Interdisciplinary Journal of Nonlinear Science} \textbf{14} (2004), 1093-1104.


\bibitem{Chaos15}J. De Luca, \emph{Covariant Hamiltonian for the electromagnetic two-body problem}, \textit{Chaos: An Interdisciplinary Journal of Nonlinear Science} \textbf{15} (2005), 033107 (10pp).

\bibitem{Politi}J. De Luca, N. Guglielmi, A. Humphries and A. Politi, \emph{Electromagnetic two-body problem: recurrent dynamics in the presence of state-dependent delay}, \textit{J. Phys. A: Math. Theor.} \textbf{43} (2010), 205103 (20pp).




%\bibitem{BellenZennaro} A. Bellen and M. Zennaro, \emph{Numerical Methods
%for Delay Differential Equations} Oxford University Press, NY (2003).





%\bibitem{Jabri} Y. Jabri, \emph{The mountain pass theorem}, Cambridge University Press, ISBN 0 521 82721 3, Cambridge (2003).




%\bibitem{pre_hydrogen} J. De Luca, \emph{Stiff three-frequency orbit of the hydrogen atom}, \textit{Physical Review E} \textbf{73}, 026221
%(2006).




\end{thebibliography}
\end{document}